\pgfplotsset{compat=newest} 
\pgfplotsset{plot coordinates/math parser=false} 
\newlength\figureheight 
\newlength\figurewidth  
\newcommand\Tstrut{\rule{0pt}{2.6ex}}         
\newcommand\Bstrut{\rule[-0.9ex]{0pt}{0pt}}
\newcolumntype{C}{>{\centering\arraybackslash}p{3em}}
\newtheorem{theorem}{Theorem}
\newtheorem{lemma}{Lemma}
\newtheorem{remark}{Remark}
\title{
Cooperative look-ahead control for fuel-efficient and safe heavy-duty vehicle platooning
}
\author{Valerio Turri, Bart Besselink, Karl H. Johansson \thanks{The authors are with the ACCESS Linnaeus Centre and Department of Automatic Control, KTH Royal Institute of Technology, Stockholm, Sweden, email: turri@kth.se, bart.besselink@ee.kth.se, kallej@kth.se.} }
\begin{document}

\maketitle
\IEEEpeerreviewmaketitle

\begin{abstract}
The operation of groups of heavy-duty vehicles (HDVs) at a small inter-vehicular distance (known as platoon) allows to lower the overall aerodynamic drag and, therefore, to reduce fuel consumption and greenhouse gas emissions. However, due to the large mass and limited engine power of HDVs, slopes have a significant impact on the feasible and optimal speed profiles that each vehicle can and should follow. Therefore maintaining a short inter-vehicular distance as required by platooning without coordination between vehicles can often result in inefficient or even unfeasible trajectories. In this paper we propose a two-layer control architecture for HDV platooning aimed to safely and fuel-efficiently coordinate the vehicles in the platoon. Here, the layers are responsible for the inclusion of preview information on road topography and the real-time control of the vehicles, respectively. Within this architecture, dynamic programming is used to compute the fuel-optimal speed profile for the entire platoon and a distributed model predictive control framework is developed for the real-time control of the vehicles. The effectiveness of the proposed controller is analyzed by means of simulations of several realistic scenarios that suggest a possible fuel saving of up to $12$\% for the follower vehicles compared to the use of standard platoon controllers.
\end{abstract} 
 
\section{Introduction}
The transportation of goods has been fundamental to the world economic development and the demand for freight transportation, together with the global economy, is expected to increase in the coming years. However, the transport sector, due to the burning of fuel, is responsible for a significant amount of greenhouse gas and \ce{CO2} emissions. In the European Union, the transport sector amounts to roughly 29\% of the total \ce{CO2} emissions and 11\% of these emissions are directly accountable to road freight transportation \cite{eu_pocketbook_2014,CO2_emissions_2012}. Globally, the \ce{CO2} emissions linked to the surface (road and rail) freight transport sector are expected to increase up to 347\% in the next 40 years if no measure is taken \cite{itf_outlook_2015}. In order to contrast this increase and the related impact on the climate change, governments all over the world are agreeing in setting stringent limitations on greenhouse gas emissions connected to road freight transportation \cite{european_commission_2011,epa_2015}. In order to cope with these limitations, heavy-duty vehicle (HDV) manufactures are facing numerous challenges. Furthermore, the expected increase of the oil price \cite{itf_outlook_2015} and the need for maintaining competitiveness require them to design vehicles and technologies that are increasingly fuel-efficient. The fuel cost for an HDV fleet owner, in fact, accounts roughly for the 35\% of the total cost of owning and operating a vehicle \cite{scania_2013}. Therefore even a reduction of a few percent of the fuel consumption would lead to significant saving.

An effective method to reduce fuel consumption and, consequently, greenhouse gas emissions, is HDV platooning. By operating groups of vehicles at small inter-vehicular distances, the overall aerodynamic drag can be reduced. As about one fourth of the HDV fuel consumption is related to aerodynamic drag \cite{hellstrom_2010}, platooning can have a large effect on the fuel consumption. Indeed, experimental results in \cite{phd_alam_2014} and~\cite{bonnet_2010} have shown a reduction in fuel consumption up to 7\%. However, in order to safely operate HDVs at the short inter-vehicular distances required for platooning, automation of the longitudinal dynamics is necessary.

In this work we present a novel control architecture for fuel-efficient and safe HDV platooning. Vehicle platooning is not a new control problem. The first works on vehicle platooning appeared in the sixties, e.g., \cite{peppard_1974,dunbar_2006,naus_2010}. The main focus of these early works was the theoretical study of the dynamics of a string of vehicles with local information, with a particular attention on the study of string stability, i.e., the attenuation of disturbances in position, speed and acceleration along the string of vehicles. The vehicle platooning concept received the first application interest under the Partners for Advanced Transportation Technology (PATH) project \cite{Shladover2007}, where platooning (of passenger vehicles) has been investigated as a means to increase highway throughput. Under this project a control architecture based on vehicle-to-vehicle communication has been developed for the platoon formation and maintenance of fully autonomous vehicles \cite{Horowitz2000}.  Although the environmental aspect was not the focus of the project, noteworthy results on fuel reduction due to HDV platooning have been reported \cite{Browand2004}. Since then, new projects and related publications have appeared with focus on different aspects of HDV platooning such as congestion, safety, fuel-efficiency and user-acceptance \cite{bergenhem2010challenges,tsugawa2013overview,bergenhem2012overview}. 

In the more recent COMPANION project \cite{companion_website}, where this paper finds its place, the fuel-efficiency of HDV platooning is the main focus. The project is  not limited to pursue the efficiency of a single platoon, but rather to create a complete fuel-efficient freight transportation system. This led to the development of a system architecture aimed to divide this complex problem into hierarchical solvable subproblems \cite{phd_alam_2014,phd_liang_2014}. An adaptation of such architecture has three layers, namely the mission planner, the platoon controller and the low-level vehicle controller, defined as follows: the mission planner is responsible for the optimal routing of the HDVs and their synchronization in order to create and dissolve platoons in optimized meeting points. This problem has been addressed in \cite{Larson2015} where the authors propose a distributed framework for the synchronization of single HDVs and platoons on the road network. The platoon controller of each platoon receives from the mission planner the optimal  route and the average speed per link that the platoon should track. Therefore it controls the vehicles' dynamics and it computes the inputs for the low-level vehicle controllers of each HDV. In \cite{Alam2015} a distributed control framework over the platoon suitable for HDV dynamics is presented. However the role of external factors, such as slopes, is not taken into account. 

Because of the large mass and the limited power of HDVs, altitude variations have a significant impact on their behavior. Even small slopes produce such large longitudinal forces on the HDVs that they are often not able to keep constant speed during uphill segments (because of limited engine power) and during downhill segments without applying brakes (because of the significant inertia). Hence it is common that HDVs have to brake and therefore waste energy in order not to overcome the speed limit during downhill sections. This has been addressed in \cite{hellstrom_2009} in order to design a control system that optimizes the fuel consumption of single HDVs driving over hilly roads. In this work the authors showed how, by using look-ahead control based on road topography information and speed limits, it is possible to reduce the fuel consumption of a single HDV up to $3.5$\%. Slopes become more critical in the case of HDVs driving in a platoon formation. In \cite{alam_2013}, the authors point out how the existing look-ahead strategies for single HDVs are not necessarily suitable for a platoon and that a dedicated approach is required. This is due to the fact that the additional requirement of keeping a small inter-vehicular distance between vehicles collides with the fact that HDVs experience significantly different longitudinal forces (e.g., gravity force depending on their mass and current road slope and air drag resistance depending on the distance from the previous vehicle). This appeared evident in the experimental results of a three-vehicle platoon driving on a highway presented in \cite{phd_alam_2014}. Even though in this work the HDVs have similar characteristics, the author highlighted how the use of feedback controllers in particularly hilly sectors of the highway could lead to an increase of the fuel consumption of the follower vehicles compared to the case in which are driving alone. These experimental data are further analyzed in Section~\ref{sec:motivational_experiment} in order to obtain a good understanding of the role of the road gradient on HDV platooning. This analysis provides a motivation for the development of a novel cooperative look-ahead control for HDV platooning with the specific objective of fuel-efficiency, for which some early results have been published in \cite{Turri2014}. Hence, this leads to the following contributions of the current paper.

First, a control architecture for the fuel-efficient and safe control of an HDV platoon is presented. The control architecture is divided into two layers, namely the platoon coordinator and the vehicle controller layers. The platoon coordinator computes the fuel-optimal speed profile for the entire platoon by taking into account information about the road ahead. This optimal profile is communicated to the decentralized vehicle controller layer that safely tracks it and computes the real-time inputs for each vehicle in the platoon. 

Second, two receding horizon strategies within this control architecture are developed. The platoon coordinator relies on a dynamic programming (DP) formulation \cite{book_bellman_1957} that exploits preview information on the road topography and speed limits to compute a speed trajectory defined over space that is safe and fuel-optimal for the whole platoon. Here we emphasize that the platoon coordinator can handle heterogeneous platoons in a systematic way. The vehicle controller layer, instead, is solved though a distributed model predictive control formulation \cite{book_borrelli_2014} that tracks the speed trajectory and a certain gap policy while guarantying fuel-efficiency and safety. More precisely, it is proved that with this architecture no collision will occur within the platoon when up to one vehicle is controlled manually. 

The performance of the proposed control architecture is finally evaluated through extensive simulations motivated by real experimental scenarios and comparisons with existing approaches for speed control and spacing policy are presented.

The rest of the paper is organized as follows. In Section~\ref{sec:motivational_experiment} we analyze the experimental results presented in \cite{phd_alam_2014}. In Section~\ref{sec:model} we present the vehicle and platoon models used in the controllers, whereas in Section~\ref{sec:control_architecture} we introduce the control architecture. The platoon coordinator and the vehicle controller layers are discussed in Sections~\ref{sec:platoon_coordinator} and~\ref{sec:trajectory_tracking}, while their performance is studied in Sections~\ref{sec:analysis_platoon_coordinator},~\ref{sec:analysis_vehicle_controller} and~\ref{sec:analysis_integrated_system}, by means of simulations. Finally, conclusions are stated in Section~\ref{sec:conclusion}.

\section{Motivating experiment} \label{sec:motivational_experiment}
In this section we analyze the experimental results presented in \cite{phd_alam_2014} in order to reach a good understanding of the impact of the road gradient on HDV platooning and motivate the need for the design of a look-ahead control framework for fuel-efficient HDV platooning. 

In this experiment a platoon of three similar HDVs (same powertrain and mass of $37.5$, $38.4$, $39.5$ tons, respectively) is driven over a $45$ km highway stretch between the Swedish cities of Mariefred and Eskilstuna. The topography for this road is displayed in Figure~\ref{fig:matlab_road_topology45}, where the red color highlights the uphill and downhill sections where the slope is too large for a nominal HDV (whose parameters are displayed in Table~\ref{tab:vehicle_parameters} in Section~\ref{sec:analysis_platoon_coordinator}) to maintain a constant speed of $22$ m/s without braking or exceeding the engine power limit. For the considered road, the steep sections represent $23$\% of the total length. Overall, the follower vehicles, by platooning, manage to save $4.1$\% and $6.5$\%, respectively. However in \cite{phd_alam_2014} it is shown that the fuel-efficiency drops significantly in the road sector where the slope is more varying. In this study we therefore analyze the behavior of the first two vehicles while driving over the particularly hilly stretch highlighted in Figure~\ref{fig:matlab_road_topology45} as Sector A for which an increase of the fuel consumption of the second HDV of $4$\% compared to the case of driving alone has been reported. The behavior is shown in Figure~\ref{fig:matlab__experiment}. The first vehicle tracks a reference speed of $21.5$ m/s using cruise control and it switches to braking mode only when the speed limit of $23.6$ m/s is reached, while the second vehicle tracks a headway gap (a distance proportional to its speed) from the first vehicle and it switches to braking mode only when the headway gap reaches a certain threshold (refer to \cite{phd_alam_2014} for a complete characterization of the controllers). In the analyzed sector three critical segments highlighted in Figure~\ref{fig:matlab__experiment} are identified where the use of feedback controls shows its limitations.
\begin{itemize}
\item \textit{Segment 1}: due to the steep downhill the first HDV is not able to maintain the reference speed and therefore it accelerates while coasting. The second vehicle, while trying to track the headway gap policy, follows the same behavior. However, due to the reduced experienced air resistance, during the downhill the second vehicle accelerates more than the first one and, when the critical headway gap is reached, it brakes. In this case the coordination between the accelerations of the two vehicles would have the potential to avoid the undesired braking.
\item \textit{Segment 2}: the headway gap deviates significantly from the reference one, due to a large relative speed at the beginning of the uphill segment and a change of gear during the segment. The second vehicle, in order to reduce the headway gap error, significantly increases the relative speed. Once the critical headway gap is reached, it brakes strongly. In this case, the prediction of the vehicles behaviors would have allowed the second vehicle to reduce the relative speed before reaching the reference headway gap and, therefore, to avoid the undesired braking.
\item \textit{Segment 3}: here the second vehicle shows a more critical behavior compared to the first downhill. In fact, during downhills, the vehicles' actuators work close to saturation (small fueling and small braking) which is not suitable for feedback controller. Therefore in Segment 3 the control state of the second vehicle continues to switch between fueling and braking modes. In this case, the use of a receding horizon framework would have allowed to predict correctly the vehicle behavior depending on the slope and, by taking into account the actuators' saturation, therefore, to obtain a smoother behavior of the vehicle.
\end{itemize}
The analysis of these experimental results provides a strong motivation for the development of a cooperative look-ahead control strategy for HDV platooning based on a receding horizon framework where the road gradient and the vehicles ahead can be explicitly taken into account.  

\begin{figure*}
\begin{center}
\pgfplotsset{every axis/.append style={font=\footnotesize,thin,tick style={ultra thin}}}
\includegraphics{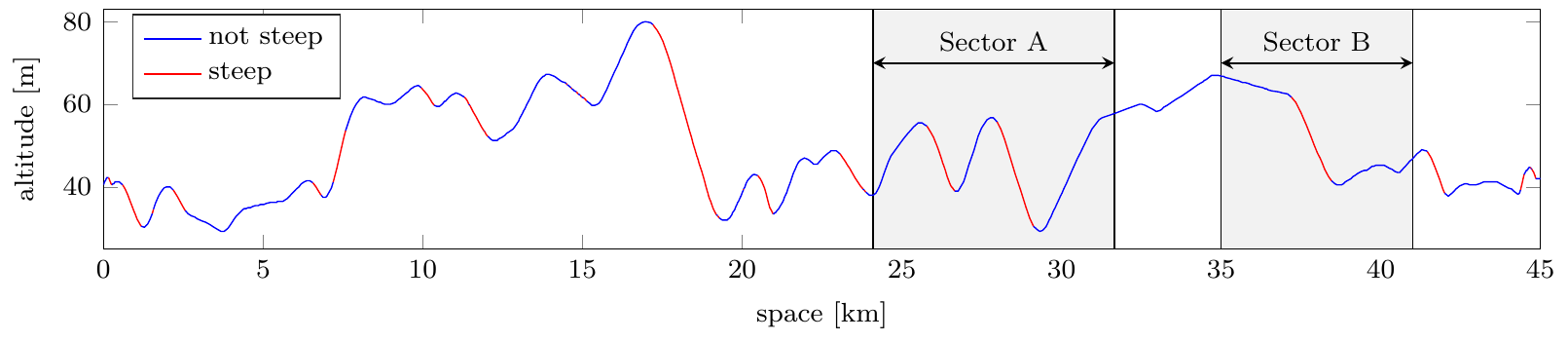}
\caption{Road topography for the $45$ km highway stretch between the Swedish cities of Mariefred and Eskilstuna. The red color highlights the uphill and downhill sections where the slope is too large for the considered HDVs to maintain a constant speed of $22$ m/s without braking or exceeding the engine power limit.}
\label{fig:matlab_road_topology45}
\end{center}
\end{figure*}

\begin{figure}
\begin{center}
\includegraphics{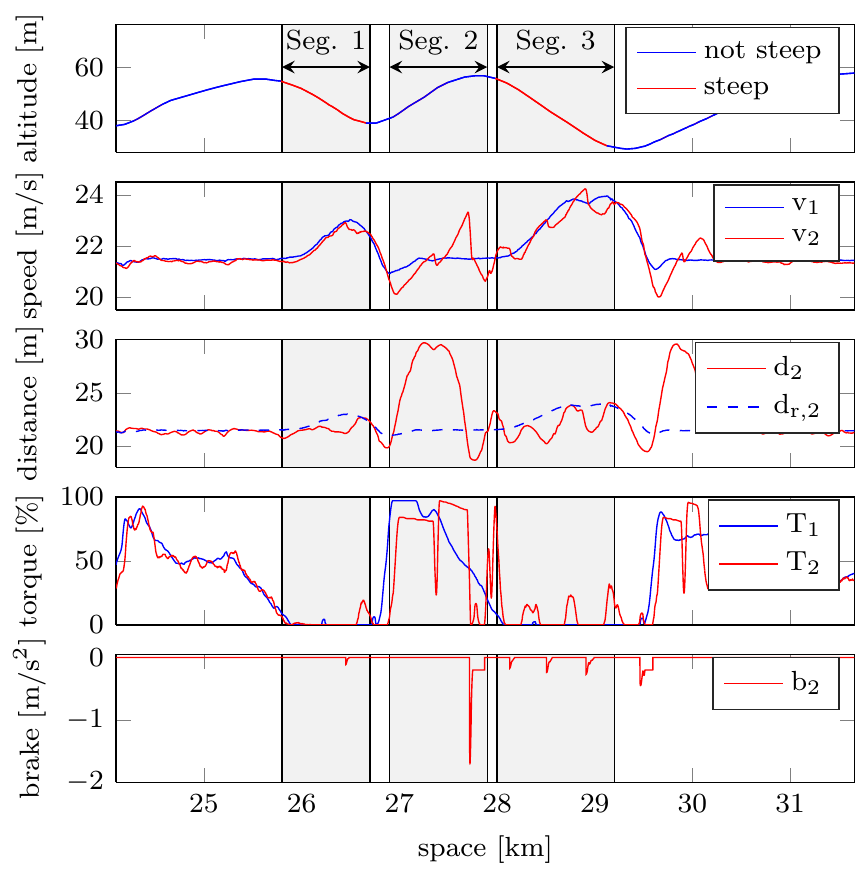}
\caption{Experiment results presented in \cite{phd_alam_2014} relative to the first two vehicles of a three vehicle platoon driving along the Sector A highlighted in Figure~\ref{fig:matlab_road_topology45}. The first plot shows the road topography, whereas the second plot shows the speed of the two vehicles; the third plot shows the real and reference (according to a headway gap policy) between the vehicles; finally the forth and fifth plots shows respectively the normalized engine torque for both vehicles and the normalized braking force for the second vehicle (the braking action of the first vehicle is not available). For additional details, see \cite{phd_alam_2014}.}
\label{fig:matlab__experiment}
\end{center}
\end{figure}

\section{Modeling} \label{sec:model}
HDVs are complex systems with a large number of interacting dynamics. Due to their heavy load, the braking and powertrain systems of an HDV have to generate and transfer extremely high torques. In this section we first present the vehicle system architecture upon which the proposed controller is designed.  Second, we introduce the model of the longitudinal dynamics of a single vehicle and a platoon with a particular focus on the components that play a significant role for the fuel consumption. Finally, we present the fuel model used to estimate the fuel consumption. 

\subsection{Vehicle system architecture} \label{sec:vehicle_system_architecture}
The functioning of an HDV is guaranteed by a large number of system units that communicate with each other through the controller area network (CAN) bus. A simplified control architecture of an HDV is shown in Figure~\ref{fig:system_vehicle_architecture}, where only the system units that are of interest for our work are displayed  \cite{phd_alam_2014}. A more detailed description of a complete vehicle system architecture is given in \cite{johansson2005vehicle}.
\begin{figure}
\begin{center}
\includegraphics{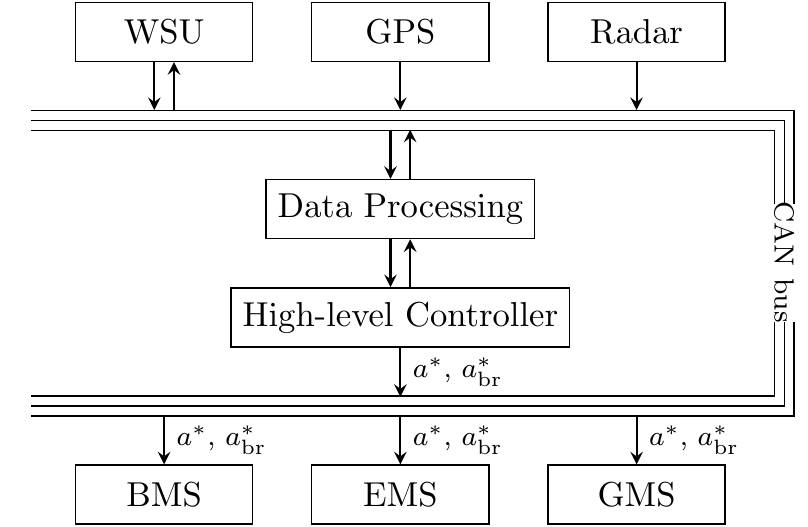}
\caption{Simplified system vehicle architecture.}
\label{fig:system_vehicle_architecture}
\end{center}
\end{figure}
The communication with the outside world relies on the wireless sensor unit (WSU). This unit shares real-time information with the other vehicles within the platoon. The global positioning system (GPS) computes the absolute position of the vehicle, while the radar measures the distance and relative speed between the current vehicle and the preceding one. The real-time state information of the platoon coming from the WSU, the GPS and the radar are fused by the data processing unit and sent to the high-level controller. The high-level controller computes the desired acceleration and a boolean variable defining which low level controller should track it. In particular, the brake management system (BMS) controls the braking actuators, while the engine and gear management systems (EMS and GMS) control the engine, the gearbox and the clutch to provide the requested acceleration. 

\subsection{Vehicle and platoon model}
In this subsection we derive the model of the longitudinal dynamics of a single vehicle and the platoon that is then used in the controller formulation. 
Using Newton's second law, the longitudinal dynamics of a single vehicle can be expressed by: 
\begin{equation}
\begin{aligned}
m_i\dot{v}_i=& \, F_{{\text{e},i}}+F_{{\text{b}},i}+F_{{\text{g}},i}(\alpha(s_i))+F_{{\text{r}},i}+F_{{\text{d}},i}(v_i, d_{i}), \\
\dot{s}_i=& \, v_i,
\end{aligned}
\label{eq:vehicle_model}
\end{equation}
where $v_i$ and $s_i$ are the states of vehicle $i$, respectively, the speed and the longitudinal position, $m_i$ is its mass and $F_{{\text{e}},i}$, $F_{{\text{b}},i}$, $F_{{\text{g}},i}$, $F_{{\text{r}},i}$ and $F_{{\text{d}},i}$ are the forces acting on the vehicle. We collect $v_i$ and $s_i$ in the state vector $x_i=[v_i \; s_i]^\text{T}$). More specifically, $F_{{\text{e}},i}$ and $F_{{\text{b}},i}$ are the control inputs and represent the forces generated by the powertrain and the braking system, respectively. The engine force $F_{{\text{e}},i}$ is characterized in Section~\ref{sec:fuel_model}, while the braking force $F_{{\text{b}},i}$ is assumed to be limited by the road friction and therefore bounded by
\begin{equation}
-m_i \eta_{i} g \mu \leq F_{\text{b},i} \leq 0 ,
\label{eq:model_constraint_braking}
\end{equation}
where $\mu$ and $\eta_{i}$ denote the (positive) road friction coefficient and the braking system efficiency, respectively. Next, $F_{{\text{g}},i}(\alpha(s_i))$ is the force due to the gravity, modeled as 
\begin{equation}
F_{{\text{g}},i}(\alpha(s_i))=-m_i g \sin (\alpha(s_i)),
\end{equation} 
where $g$ is the gravity acceleration and $\alpha(s_i)$ the road slope at position $s_i$. The rolling resistance is represented by $F_{{\text{\text{r}}},i}$ and is modeled as 
\begin{equation}
F_{{\text{r}},i}=-c_r m_i g,
\end{equation}
where $c_r$ is the rolling coefficient. Finally $F_{{\text{d}},i}(v_i, d_{i})$ is the aerodynamic drag, modeled as 
\begin{equation}
F_{{\text{d}},i}(v_i, d_i)=-\tfrac{1}{2}\rho A_v C_D(d_{i}) v_i^2,
\end{equation}
where $\rho$ is the air density, $A_v$ is the cross-sectional area of the vehicle and $C_D$ is the air drag coefficient. In order to take into account the influence of the inter-vehicular distance on the aerodynamic force that plays an essential role in platooning, the drag coefficient $C_D$ is defined as a function of the distance to the previous vehicle $d_i$. This dependence is modeled by 
\begin{equation}
C_D(d_i)=C_{D,0} \left( 1-\frac{C_{D,1}}{C_{D,2}+d_i} \right),
\label{aerodynamic_force_model}
\end{equation}
where the parameters $C_{D,1}$ and $C_{D,2}$ have been obtained by regressing the experimental data presented in \cite{book_hucho_1998}. The effect of the short inter-vehicular distance on the leading vehicle is neglected since it is smaller than one on the follower vehicles. The experimental data and the regression curve are displayed in Figure~\ref{fig:matlab_drag_coefficient}.
\begin{figure}
\begin{center}
\includegraphics{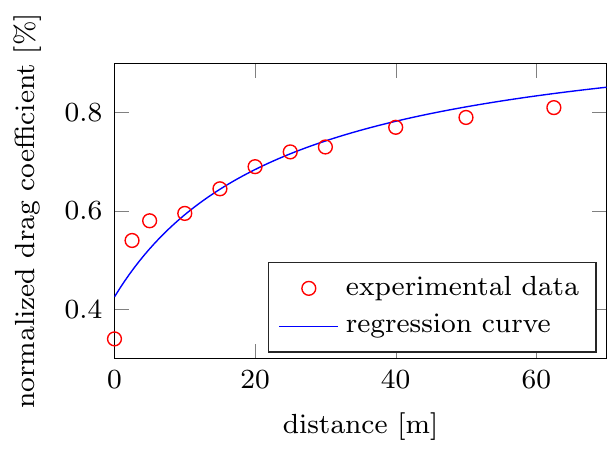}
\caption{Experimental data \cite{book_hucho_1998} and regression curve of the normalized drag coefficient experienced by an HDV as function of the distance to the previous vehicle. }
\label{fig:matlab_drag_coefficient}
\end{center}
\end{figure}

\begin{remark}
In this work we have chosen to model the air drag coefficient on the basis of the experimental data presented in \cite{book_hucho_1998} relative to the second vehicle of a two buses platoon driving at $80$ km/h. In the literature reports on air drag coefficient or fuel consumption measures based on both real experiments \cite{alam_2010,bonnet_2010,Lammert2014} and fluid dynamics simulator \cite{Norrby2014} are presented. They show a reduction of the air drag coefficient for short inter-vehicular distances. How the reduction relates to the inter-vehicular distance varies. This variability has been attributed to the weather condition (e.g, temperature, humidity or wind) and the shape of the vehicles. 
\end{remark}

The model of a platoon of $N_v$ vehicles is given by the combination of equations \eqref{eq:vehicle_model} for $i=1,...,N_v$ and the distance definition:
\begin{equation}
d_i= \left\{ \begin{aligned} 
& \infty ,&\text{if }& i=1,  \\ 
& s_{i-1}-s_i-l_{i-1}, \; &\text{if }& i \geq 2, \end{aligned} \right. ,
\label{distance_definition}
\end{equation}
where $l_i$ denotes the length of vehicle $i$. 

\subsection{Fuel model} \label{sec:fuel_model}
The powertrain is a complex system composed by engine, clutch, gearbox and final gear that allows to transform the fuel's energy into longitudinal force. In this subsection we derive a simple model of the powertrain that captures the intrinsic relation between consumed fuel and generated traction force. In the model derivation we ignore transmission energy losses and the rotational inertia of the powertrain components because they are assumed to be negligible when compared to the vehicle mass.

Engine performance is typically described by the brake specific fuel consumption (BSFC), that defines the ratio between consumed fuel and generated energy for various operation points (i.e., engine speed and generated torque). In Figure~\ref{fig:matlab_engine_map} we show the BSFC map for an HDV engine of $400$ hp \cite{Sandberg_2001_Lic}, where the dotted lines represent the collection of operation points with equal generated power. This map can be easily converted in one that defines the fuel flow $\delta_{i}$ as function of the engine speed $\omega_{i}$ and the generated engine power $P_{i}$, i.e., $\delta_{i}=\phi_{i}(\omega_{i},P_{i})$. By assumption, the engine power $P_{i}$, passing through the clutch, the gearbox and the final gear is completely transferred to the wheels. The rotational speed, instead, changes between the transmission components and is finally transformed into longitudinal speed by the wheels. Ultimately, under the assumption of no longitudinal slip, the vehicle speed $v_i$ can be defined as $v_i=k_i g_i\omega_{i}$, where $k_i$ is a constant gain and $g_i$ is the gear ratio of the gearbox. Therefore the fuel flow can be expressed as a function of the speed $v_i$, the traction force $F_{\text{e},i}$ and the gear ratio $g_i$ as
\begin{equation}
\delta_{i}=\phi_{i}\left(\frac{v_i}{k_i g_i},F_{\text{e},i} v_i \right).
\label{eq:powertrain_model_old}
\end{equation}

\begin{figure}
\begin{center}
\includegraphics{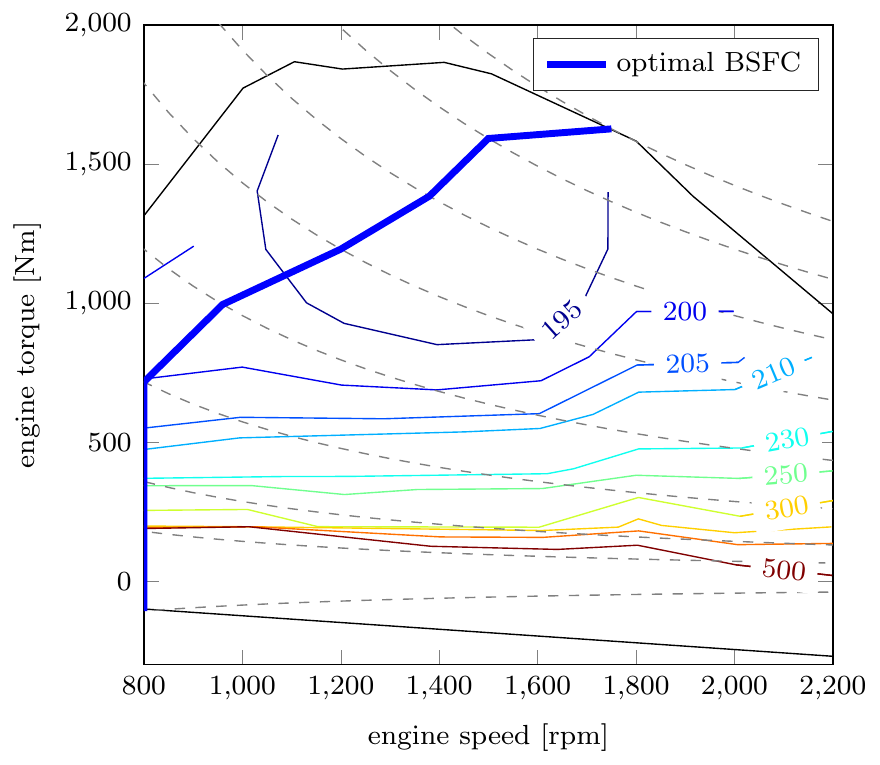}
\caption{BSFC map for a $400$ hp engine regenerated from \cite{Sandberg_2001_Lic}. The plot shows the BSFC expressed in g/kWh as function of the engine speed and torque. The dotted lines represent equal power curves, while the blue thick line represents the collection of the fuel-optimal operation points for various generated powers.}
\label{fig:matlab_engine_map}
\end{center}
\end{figure}

In order to be efficiently used in the control design, the fuel model is further simplified by removing the dependence of the fuel flow $\delta_i$ on the gear ratio $g_i$ through the introduction of an additional assumption: the gear ratio can be changed continuously on a unlimited span and the gear management system chooses the most efficient gear ratio. Hence, we redefine the fuel model as
\begin{equation}
\delta_i= \min_{\omega_i}\phi_{i}\left(\omega_i,F_{\text{e},i} v_i \right)=\phi_{\text{opt},i}(F_{\text{e},i}v_i).
\label{eq:powertrain_model_old2}
\end{equation}
The resulting curve $\phi_{\text{opt},i}(\cdot)$ is depicted in Figure~\ref{fig:fuel_model} and is linearly regressed in order to obtain the fuel model used in the controller design, defined by
\begin{equation}
\delta_i=p_{1,i} F_{\text{e},i} v_i + p_{0,i}.
\label{eq:powertrain_model}
\end{equation}
From this analysis we can also obtain the bounds on the generated power that are independent from the engine speed and the gear ratio:
\begin{equation}
P_{\text{min},i} \leq F_{\text{e},i} v_i \leq P_{\text{max},i}.
\label{eq:model_constraint_power} 
\end{equation} 
In Figure~\ref{fig:fuel_model} the two fuel models in \eqref{eq:powertrain_model_old2} and  \eqref{eq:powertrain_model}, and the correspondent optimal engine speed are displayed. We note that the approximation error is negligible.
\begin{figure}
\begin{center}
\includegraphics{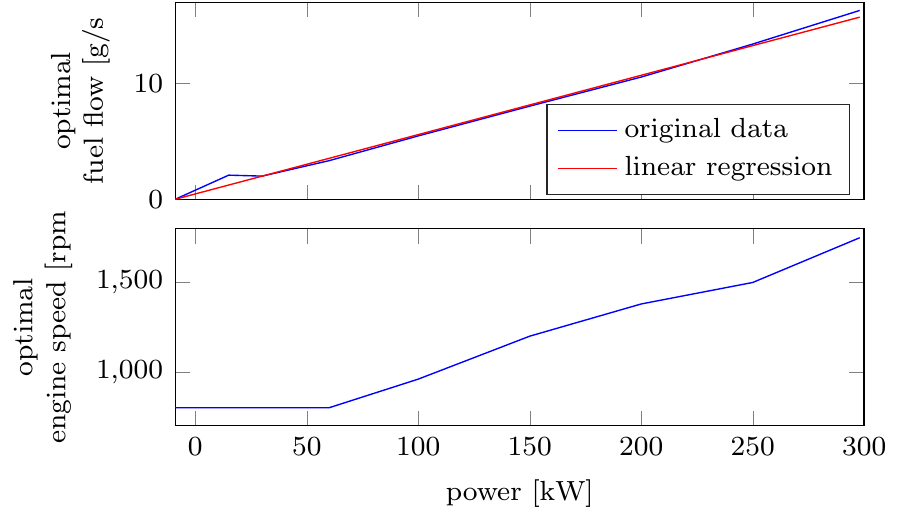}
\caption{The plots show the optimal fuel flow and engine speed as function of the generated power. In the first plot we also display the fuel model expressed in \eqref{eq:powertrain_model} obtained by the regression of the raw data.}
\label{fig:fuel_model}
\end{center}
\end{figure}

\section{Platoon controller architecture}\label{sec:control_architecture}
\begin{figure}
\begin{center}
\vspace{0.5cm}
\begin{tikzpicture}[
planner/.style={rectangle, draw=black, dashed, align=center, minimum height=12mm, minimum width=53.3mm},
db/.style={rectangle, draw=black, dashed, align=center, minimum height=12mm, minimum width=22mm, text width=18mm},
truck/.style={rectangle, draw=black, dashed, align=center, minimum height=12mm, minimum width=22mm},
vehiclelayer/.style={rectangle, draw=black, align=center, minimum height=14mm, minimum width=22mm, text width=18mm},
platoonlayer/.style={rectangle, draw=black, align=center, minimum height=12mm, minimum width=84mm},
signal/.style={semithick,>=stealth},
gap/.style={semithick, >=stealth, dashed}]
\def\dx{1mm};
\node (p) at (0,4) [platoonlayer] {platoon coordinator};
\node (pl) at (-1.55,6) [planner] {mission planner};
\node (db) at (3.1,6) [db] {road database};
\draw[<-,signal] ([xshift=0]{p.north-|db.south}) to node[left] {$\alpha^{\text{s}}$, $v_\text{max}^{\text{s}}$, $v_\text{min}^{\text{s}}$} ([xshift=0]db.south);
\draw[<-,signal] ([xshift=0]{p.north-|pl.south}) to node[left]{$\bar{v}$ } ([xshift=0]pl.south);
\node (t1) at (-3.1,0) [truck] {\includegraphics[width=19mm]{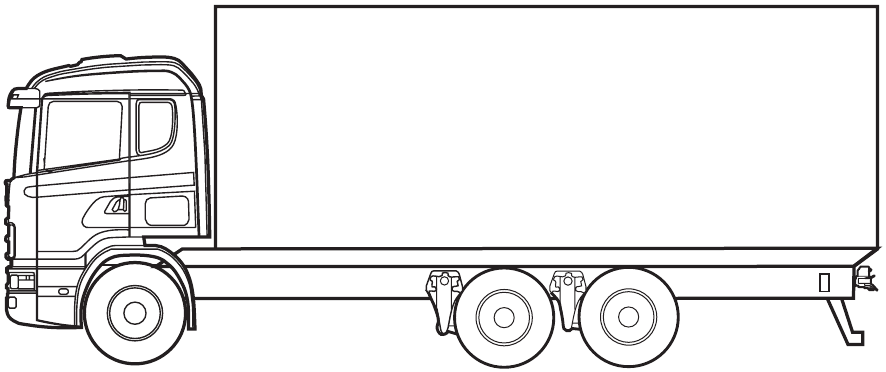}};
\node (v1) at (-3.1,2) [vehiclelayer] {vehicle $1$ controller};
\draw[->,signal] ([xshift=-\dx]v1.south) to node[left]{$a^*_1(t)$} ([xshift=-\dx]t1.north);
\draw[<-,signal] ([xshift=\dx]v1.south) to node[right]{$x_1(t)$} ([xshift=\dx]t1.north);
\draw[->,signal] ([xshift=-\dx]{p.south-|v1.north}) to node[left]{$v^{\text{s},*}(\cdot)\vphantom{x_1}$} ([xshift=-\dx]v1.north);
\draw[<-,signal] ([xshift=\dx]{p.south-|v1.north}) to node[right]{$x_1(t)\vphantom{\bar{v}}$} ([xshift=\dx]v1.north);
\node (t2) at (0,0) [truck] {\includegraphics[width=19mm]{scania_small_standalone.pdf}};
\node (v2) at (0,2) [vehiclelayer] {vehicle $2$ controller};
\draw[->,signal] ([xshift=-\dx]v2.south) to node[left]{$a^*_2(t)$} ([xshift=-\dx]t2.north);
\draw[<-,signal] ([xshift=\dx]v2.south) to node[right]{$x_2(t)$} ([xshift=\dx]t2.north);
\draw[->,signal] ([xshift=-\dx]p.south-|v2.north) to node[left]{$v^{\text{s},*}(\cdot),\tau_{2}\vphantom{x_1}$} ([xshift=-\dx]v2.north);
\draw[<-,signal] ([xshift=\dx]p.south-|v2.north) to node[right]{$x_2(t)\vphantom{\bar{v}}$} ([xshift=\dx]v2.north);
\node (t3) at (3.1,0) [truck] {\includegraphics[width=19mm]{scania_small_standalone.pdf}};
\node (v3) at (3.1,2) [vehiclelayer] {vehicle $3$ controller};
\draw[->,signal] ([xshift=-\dx]v3.south) to node[left]{$a^*_3(t)$} ([xshift=-\dx]t3.north);
\draw[<-,signal] ([xshift=\dx]v3.south) to node[right]{$x_3(t)$} ([xshift=\dx]t3.north);
\draw[->,signal] ([xshift=-\dx]p.south-|v3.north) to node[left]{$v^{\text{s},*}(\cdot),\tau_{3}\vphantom{x_1}$} ([xshift=-\dx]v3.north);
\draw[<-,signal] ([xshift=\dx]p.south-|v3.north) to node[right]{$x_3(t)\vphantom{\bar{v}}$} ([xshift=\dx]v3.north);
\draw[<-,signal] (v2.east-|v3.west) to node[below]{${x_2}(t)$} (v2.east);
\draw[<-,signal] (v1.east-|v2.west) to node[below]{${x_1}(t)$} (v1.east);
\draw[<->,gap] (t1.east) to node[below] {$d_{2}(t)$} (t2.west);
\draw[<->,gap] (t2.east) to node[below] {$d_{3}(t)$} (t3.west);
\end{tikzpicture}
\caption{System architecture for look-ahead HDV platooning.}
\label{fig_architecture}
\end{center}
\end{figure}
The system architecture for the look-ahead HDV platoon controller is shown in Figure~\ref{fig_architecture}. The mission planner suggests routes and platoon opportunities. The platoon coordinator supervises the platoon behavior exploiting road information. The vehicle controllers execute the speed profiles for the individual vehicles. 

The platoon coordinator layer exploits available information on the topography of the planned route to find a fuel-optimal speed profile for the entire platoon, while satisfying the average speed requirement provided by the mission planner. Hereby, in order to capture the dynamics of the road topography, it considers a horizon of several kilometers and takes the constraints of all vehicles in the platoon into account. As a result, it can be guaranteed that every vehicle in the platoon is able to track the required speed profile. A single speed trajectory is computed by the platoon coordinator, representing the speed of the platoon. However, when this speed profile is specified as a function of space (i.e., position on the road) and the inter-vehicle spacing is chosen according to a pure time delay, every individual vehicle in the platoon can track this single speed profile. It is remarked that this layer can typically operate in a receding horizon fashion, providing an updated speed profile roughly every 10 seconds or when the recalculation is needed due to a strong deviation from the original speed profile. Finally, as this layer is not safety-critical and not related to a specific vehicle, it can be implemented in any of the platooning vehicles or even in an off-board road unit. In Section~\ref{sec:platoon_coordinator}, we present a DP approach to formulate and solve the stated problem.

The vehicle controller is responsible for the real-time control of each vehicle in the platoon and is aimed at tracking the desired speed profile as resulting from the platoon coordinator. It also exploits the communication between vehicles of the assumed trajectories to ensure the proper spacing strategy. This layer guarantees the safety of platooning operations to for instance avoid collisions between trucks. Because of the safety critical aspect, this layer is implemented in a distributed fashion in each vehicle of the platoon. More precisely, each vehicle controller runs in the block named high-level controller in the system vehicle architecture shown in Figure~\ref{fig:system_vehicle_architecture}. In Section~\ref{sec:trajectory_tracking} a distributed model predictive control approach for this problem is discussed.

Figure~\ref{fig_formulation_architecture} shows how the optimization problems in the platoon coordinator and the vehicle controllers interact, and their mathematical structure. Note how the platoon coordinator, in order to have a good prediction of the consumed fuel over the horizon, uses an accurate non-linear model of the vehicle, while the vehicle controller layer, in order to enable fast computation necessary for the real-time control of the vehicle, uses a linear vehicle model. 

\begin{figure}
\begin{center}
\vspace{0.5cm}
\begin{tikzpicture}[
platoonlayer/.style={rectangle, draw=black, align=center, minimum height=12mm, minimum width=74mm},
signal/.style={semithick,>=stealth}]
\node (p1) at (0,0) [platoonlayer] {Platoon coordinator  \\  \tiny {\color{white}a}\normalsize \\ $\begin{aligned}
\min \quad & \sum_{i=1}^{N_v} \text{fuel}_i  \\
\text{subj. to} \quad
& \text{Non-linear HDVs model}\\
& \text{Constraints on state and input}\\
& \text{Constraint on avarage speed} \\
& \text{Same speed profile for all HDVs}
  \end{aligned}$};
\node (p2) at (0,-50mm) [platoonlayer] {Vehicle $i$ controller  \\ \tiny {\color{white}a}\normalsize \\ $\begin{aligned}
\min \quad & \text{Deviation from reference profile} \\
\text{subj. to} \quad 
& \text{Linear HDV model}\\
& \text{Constraints on state and input}\\
& \text{Safety constraint} \\
& \text{Soft constraint on braking}
  \end{aligned}$};
\draw[<-,signal] (p2.north-|p2.south) to node[right,text width=3.5cm]{reference speed profile,  gap policy} (p1.south);
\end{tikzpicture}
\caption{Optimal control problems solved in the platoon coordination and vehicle controllers.}
\label{fig_formulation_architecture}
\end{center}
\end{figure}
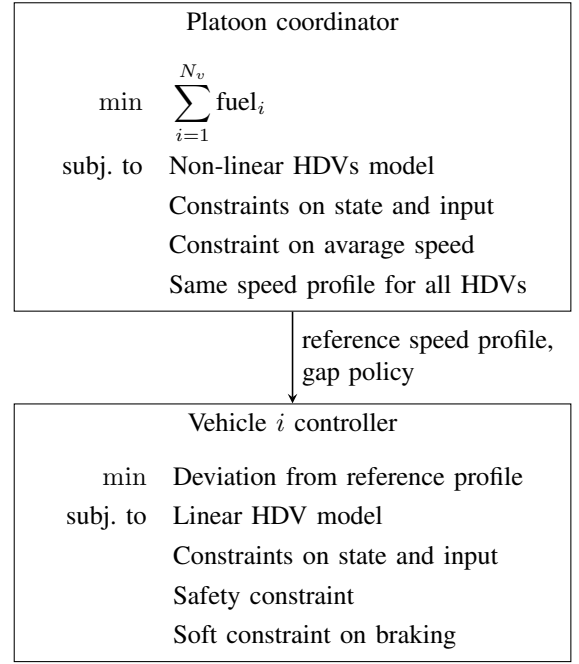

\section{Platoon coordinator} \label{sec:platoon_coordinator}
The platoon coordinator is the higher layer of the platoon control architecture. 
It takes as inputs the average speed requirement $\bar{v}$ from the mission planner and the current vehicles state $x_i(t)$ from their vehicle controllers. By exploiting the available information on the planned route (i.e., slope data $\alpha^{\text{s}}$ and speed limits $v_\text{max}^{\text{s}}$), it generates a unique feasible and fuel-optimal speed profile $v^{\text{s},*}(\cdot)$ defined over space for all the vehicles within the platoon (i.e., $v_i^{\text{s},*}(z)=v^{\text{s},*}(z)$ for $i=j,...,N_v$, where $z$ is the space variable). Furthermore, according to safety criteria, it specifies the time gaps $\tau_i$, defined as the time delay between two consecutive vehicles passing through the same point, i.e., 
\begin{equation}
s_i(t)=s_{i-1}(t-\tau_i).
\label{eq:time_gap}
\end{equation}
Note that this spacing policy is consistent with the requirement that all vehicles have to follow the  same speed profile over space. This can be easily shown by computing the time derivative of the left hand side,
\begin{equation}
\frac{ds_i(t)}{dt}=v_i(t)=v_i^{\text{s}}(s_i(t)),
\label{eq:der_time_gap_lfs}
\end{equation}
and the right hand side of \eqref{eq:time_gap}, 
\begin{equation}
\begin{aligned}
\frac{ds_{i-1}(t-\tau_i)}{dt}&=v_{i-1}(t-\tau_i) \\
&=v_{i-1}^{\text{s}}(s_{i-1}(t-\tau_i))=v_{i-1}^{\text{s}}(s_i(t)),
\label{eq:der_time_gap_rfs}
\end{aligned}
\end{equation}
where $v_i^\text{s}(s)$ denote the speed of vehicle $i$ at space $s$. In fact, by combining the time gap definition \eqref{eq:time_gap} with \eqref{eq:der_time_gap_lfs} and \eqref{eq:der_time_gap_rfs}, we obtain \mbox{$v_i^{\text{s}}(s)=v_{i-1}^{\text{s}}(s)$}.

The coordinator layer is implemented using a DP framework \cite{book_bellman_1957} that runs in closed-loop. The parameters that characterize the DP problem are the discretization space $\Delta s_{\text{DP}}$, the horizon length $H_{\text{DP}}$ and the refresh frequency $f_{\text{DP}}$. We also define the horizon space length as $S_{\text{DP}}=H_{\text{DP}}\Delta s_{\text{DP}}$.

In the coming subsections we introduce all the components of the DP formulation, i.e., the vehicle model, the constraints on the input and states and finally the cost function. 

\subsection{Platoon model} \label{sec:pc_platoon_model}
The platoon coordinator layer uses a discretized version of the vehicle model \eqref{eq:vehicle_model}, where the discretization is carried out in the space domain using the implicit Euler approximation. The discretized vehicle model is: 
\begin{subequations}
\begin{align}
v_{i}^{\text{s}}(z_k)\frac{v_{i}^{\text{s}}(z_k)-v_{i}^{\text{s}}(z_{k-1})}{\Delta s_{\text{DP}}}=
&F^{\text{s}}_{\text{e},i}(z_k)+F^{\text{s}}_{\text{b},i}(z_k) \label{eq:model_ptg}\\
&-m_i g [\sin (\alpha(z_k))+c_r ]  \nonumber \\
&-\tfrac{1}{2}\rho A_v C_{D}(d_i^{\text{s}}(z_k)) (v_i^{\text{s} }(z_k))^2,  \nonumber \\
v_i^{\text{s}}(z_k)\frac{t_i^{\text{s}} (z_k)-t_i^{\text{s}}(z_{k-1})}{\Delta s_{{\text{DP}}}}=&1  \label{eq:model_ptg2},  
\end{align}
\end{subequations}
where $z_k$ is the discretized space variable, $v_{i}^{\text{s}}(z_k)$, $F_{\text{e},i}^{\text{s}}(z_k)$, $F_{\text{b},i}^{\text{s}}(z_k)$ and  $d_i^{\text{s}}(z_k)$ are the speed, the engine and braking forces and the distance to the previous vehicle expressed as function of space, respectively. 

The advantage of using the space discretization is that, by relaxing the average speed requirement, there is no constraint depending on time. The relaxation is done by removing the average speed constraint and introducing instead travel time over the horizon in the cost function, hereby using an appropriate weighting. This allows to ignore the time dynamics and therefore reduce significantly the computational complexity.

A drawback of the space discretization is that the distance definition \eqref{distance_definition} cannot be expressed in the space domain. Instead, the following approximated expression, as function of the current vehicle speed \mbox{$v_i^{\text{s}}(z_k)$}, is used: 
\begin{equation}
d_i^{\text{s}}(z_k)=v_i^{\text{s}}(z_k) \tau_{i}-l_{i-1}.
\end{equation} 

In the DP formulation we refer to \eqref{eq:model_ptg} as \mbox{ $v_{i}^{\text{s}}(z_{k-1})=f_{v,i}^{\text{s}}(v_{i}^{\text{s}}(z_k),{u}_i^{\text{s}}(z_k))$}, where ${u}_i^{\text{s}}(z_k)$ is the input vector defined as ${u}_i^{\text{s}}(z_k)=[F_{\text{e},i}^{\text{s}}(z_k)$, $F_{\text{b},i}^{\text{s}}(z_k) ]^\text{T}$.

\subsection{Model constraints}
The platoon model is constrained by introducing bounds on the input and the speed. 
\subsubsection{Input constraints}
According to \eqref{eq:model_constraint_braking} and \eqref{eq:model_constraint_power}, the engine and braking forces are bounded by the following constraints:
\begin{equation}
\begin{aligned}
P_{\text{min},i}v_i^{\text{s}}(z_k) &\leq F^{\text{s}}_{\text{e},i}(z_k) \leq P_{\text{max},i}v_i^{\text{s}}(z_k),\\
-m_i \eta_{i} g \mu &\leq F^{\text{s}}_{\text{b},i}(z_k) \leq 0.
\end{aligned}
\end{equation}
In the DP formulation, we refer to these constraints as ${u}^{\text{s}}_i(z_k) \in \mathcal{U}^{\text{s}}_i(z_k)$.

\subsubsection{State constraints}
In order to take into account the road speed limits, the speed is bounded by
\begin{equation}
v_{\text{min}}(z_k) \leq v_i^{\text{s}}(z_k) \leq v_{\text{max}}(z_k). 
\label{eq:speed_limit_dp}
\end{equation}
We refer to this constraint as $v_i^{\text{s}}(z_k) \in \mathcal{V}^{\text{s}}(z_k)$. 

Moreover, in order to require all the vehicle to follow the same speed profile, the constraint
\begin{equation}
v_i^{\text{s}}(z_k)=v^{\text{s}}(z_k), \quad i=1,...,N_v.
\label{eq:red_complexity}
\end{equation}
is introduced. The practical effect of this constraint is to reduce the search space of the dynamic programming algorithm to one dimension rather then the number of vehicles in the platoon, enabling fast computation. 

\subsection{Cost function}
\label{sec:cost_function_definition}
The objective of the platoon coordinator layer is to define the optimal speed profile that minimizes the fuel consumption of the whole platoon, while maintaining a certain average speed. This is done by defining the cost function as the weighted sum of two terms: a first term $J_{\text{f}}(v^{\text{s}}(z_{j}),{u}_i^{\text{s}}(z_{j}))$ for $j=k,...,k+H_{\text{DP}}-1$ and $i=1,...,N_v$ representing the fuel amount consumed by the platoon and a second term $J_{\text{t}}(v^{\text{s}}(z_{j}))$ for $j=k,...,k+H_{\text{DP}}-1$ representing the travel time over the horizon, i.e.,
\begin{equation}
J_{\text{DP}}(v^{\text{s}}(z_{j})),{u}^{\text{s}}(z_{j}))=J_{\text{f}}(v^{\text{s}}(z_{j}),{u}_i^{\text{s}}(z_{j}))+\beta J_{\text{t}}(v^{\text{s}}(z_{j})),
\label{eq:cost_function}
\end{equation}
where $\beta$ represents a trade-off weight\footnote{Instead of the constraint on the average speed of Figure \ref{fig_formulation_architecture}, the parameter $\beta$ is tuned to give the desired average time}. The term $J_{\text{f}}(v^{\text{s}}(z_{j}),{u}_i^{\text{s}}(z_{j}))$ is computed by using the fuel model \eqref{eq:powertrain_model}, taking also into account a final term representing the kinematic energy of the platoon at the end of the horizon: 
\begin{equation}
\begin{aligned}
J_{\text{f}}(v^{\text{s}}(z_{j}),{u}_i^{\text{s}}(z_{j})) &\\
= &\sum_{i=1}^{N_v} \sum_{j=k}^{k+H_{\text{DP}}-1}  \Delta s_{\text{DP}} \left(p_{1,i} F^\text{s}_{\text{e},i}(z_{j})  +\frac{p_{0,i}}{v^{\text{s}}(z_{j})} \right) \\
&- \sum_{i=1}^{N_v} p_{1,i} \frac{m_i (v^{\text{s} }(z_{h+H_{\text{DP}}-1}))^2}{2}. \\
\end{aligned}
\nonumber
\end{equation}
The term $J_{\text{t}}(v^{\text{s}}(z_{j}))$ is obtained by using the time model \eqref{eq:model_ptg2}: 
\begin{equation}
J_{\text{t}}(v^{\text{s}}(z_{j}))=\sum_{j=k}^{k+H_{\text{DP}}-1}\frac{\Delta s_{\text{DP}}}{v^{\text{s}}(z_j)}. \nonumber
\end{equation}

\subsection{Dynamic programming formulation} 
\label{sec:dynamic_programming formulation}
We now have all the elements to formulate the DP problem solved in the platoon coordinator:
\begin{subequations} 
\allowdisplaybreaks
\begin{align}
\displaystyle \min_{{u}^{\text{s}}(z_{j})} 
&J_{\text{DP}}(v^{\text{s}}(z_{j}),{u}^{\text{s}}(z_{j}))\\
\text{subj. to} \quad
& v^{\text{s}}_i(z_{j-1})=f_{v,i}^{\text{s}}(v_i^{\text{s}}(z_{j}),{u}_{i}^\text{s}(z_{j})), \\
& {u}^{\text{s}}_i(z_{j}) \in \mathcal{U}^{\text{s}}_i(z_{j}) ,\\
& v_i^{\text{s}}(z_{j})=v^{\text{s}}(z_{j}) \in \mathcal{V}^{\text{s}} (z_{j}),
\label{eq:dp_speed_constraint}  \\
& z_{k}=s_1(t),  \quad \quad  \label{eq:dp_initial1}\\
& v^{\text{s}}(z_{k})=v_1(t), \label{eq:dp_initial2}
\end{align}
\label{eq:dp}
\end{subequations}
$\!\!$for $j=k,...,k+H_{\text{DP}}-1$, where the equations \eqref{eq:dp_initial1} and \eqref{eq:dp_initial2} represent the initial conditions of the DP formulation.

\section{Vehicle controller}
\label{sec:trajectory_tracking}
This section focuses on the distributed model predictive controllers running in the vehicle controller layer. 

Each vehicle controller runs locally. Vehicle $i$ receives the optimal speed profile $v^{\text{s},*}(\cdot)$ and the time gap $\tau_i$ from the platoon coordinator and state information from the preceding vehicle.  By tracking the optimal speed profile and gap policy requirement, and satisfying a safety constraint, it generates the optimal state and input trajectories, respectively  $x_i^*(\cdot|t)$ and $a_i^*(\cdot|t)$, and the desired acceleration $a^*_{i}(t)$ for the vehicle low-level controllers. The parameters that characterize the MPC formulation are the discretization time $\Delta t_{\text{MPC}}$, the horizon steps number $H_{\text{MPC}}$, the refresh frequency $f_{\text{MPC}}$ and the length of the horizon defined as $T_{\text{MPC}}=H_{\text{MPC}} \Delta t_{\text{MPC}}$. 

In the coming subsections we introduce all the components of the MPC formulation, i.e., the vehicle model, the constraints on the input and state, the safety constraint and finally the cost function. 

\subsection{Vehicle model}
In the MPC formulation the vehicle is described by 
\begin{equation}
 x_{i}(t_{j+1}|t_k)= A x_{i}(t_{j}|t_{k})+ B a_i(t_{j}|t_{k}),  
\label{eq:model_vt}
\end{equation}
where 
\begin{equation}
 A \triangleq \left[ 
\begin{matrix}
1 & 0\\
\Delta t_\text{MPC} & 1 
\end{matrix}
\right] , \quad   B\triangleq \left[ \begin{matrix}
\Delta t_\text{MPC} \\
0
\end{matrix} \right]. \nonumber
\label{eq:model_veh1}
\end{equation}
The variables $x_i(t_{j}|t_{k})=[v_i(t_{j}|t_{k}) \; s_i(t_{j}|t_{k})]^\text{T}$ and $a_i(t_{j}|t_{k})$ denote the predicted state (speed and position) and control input (acceleration) trajectories of vehicle $i$ associated to the update time $t_k$, respectively. We also introduce three additional trajectories associated to each update time $t_k$ that will be used later in the MPC formulation:
\begin{itemize}
\item the optimal state trajectory $x_{i}^*(t_{j}|t_k)$,
\item the state reference trajectory $\bar{x}_{i}(t_{j}|t_k)$,
\item the assumed state trajectory $\hat{x}_{i}(t_{j}|t_k)$,
\end{itemize}
for $j=k,...,k+H_{\text{MPC}}-1$ and the corresponding input control trajectories defined likewise. While the predicted and optimal trajectories are function of the optimization variable, the reference and assumed trajectories are pre-computed. More precisely the reference trajectories \mbox{$\bar{x}_{i}(t_{j}|t_k)=[\bar{v}_{i}(t_{j}|t_k) \; \bar{s}_{i}(t_{j}|t_k)]^\text{T}$} and $\bar{a}_{i}(t_{j}|t_k)$ are computed from the reference trajectory $v^{\text{s},*}(\cdot)$ and the current position $s(t_k)$ of the vehicle. In particular, $\bar{s}_{i}(t_{j}|t_k)$ is defined recursively as
\begin{equation}
\bar{s}_{i}(t_{j}|t_k)= \left\lbrace
\begin{aligned}
&s_{i}(t_{j}) ,\; &j =k, \\ &\bar{s}_{i}(t_{j-1}|t_k)+ \Delta t_\text{MPC} \bar{v}^{\text{s},*}(\bar{s}_{i}(t_{j-1}|t_k)),\; &j >k,
\end{aligned}
\right.
\nonumber
\end{equation}
while $\bar{v}_{i}(t_{j}|t_k)$ is defined as
\begin{equation}
\bar{v}_{i}(t_{j}|t_k)=\bar{v}^{\text{s},*}(s_i(t_{j}|t_k)).
\nonumber
\end{equation}
The control input reference trajectory $\bar{a}_{i}(t_{j}|t_k)$ is defined as finite differences of $\bar{v}_{i}(t_{j}|t_k)$, i.e.,
\begin{equation}
\bar{a}_{i}(t_{j}|t_k)= (\bar{v}_{i}(t_{j+1}|t_k)-\bar{v}_{i}(t_{j}|t_k)) \Delta t_\text{MPC}. 
\nonumber
\end{equation}
The assumed state and control input trajectories are computed from the optimal and real trajectories of the vehicle as 
\begin{equation}
\hat{x}_{i}(t_{j}|t_k)= \left\lbrace
\begin{aligned}
&x_{i}(t_{j}) ,\; &&j <k, \\ &{x}_{i}^*(t_{j}|t_{k-1}),\; &&k \leq j <k+H_\text{MPC},
\end{aligned}
\right.
\label{eq:safety_definition_assumed}
\end{equation}
and $\hat{a}_{i}(t_{j}|t_k)$ likewise. As mentioned at the beginning of this section, each vehicle communicates the assumed trajectory $\hat{x}_{i}(t_{j}|t_k)$ to the follower vehicle. In this case, the use of the optimal trajectory computed the previous step reflects the assumption of a maximum communication delay of $\Delta t_\text{MPC}$.

\subsection{Input and model constraints}
In order to take into account the bounds on the braking force \eqref{eq:model_constraint_braking} and the engine power \eqref{eq:model_constraint_power}, as done in the platoon coordinator layer, the control input $a_i$ is bounded by the following non-linear constraint: 
\begin{equation}
-\eta_i \mu g+ \frac{F_\text{ext}(x_i,\hat{s}_{i-1})}{m_i} \leq a_i \leq \frac{P_{i,\text{max}}}{m_i v_i}+\frac{F_\text{ext}(x_i,\hat{s}_{i-1})}{m_i},
\label{eq:constraint_on_input}
\end{equation}
where $F_\text{ext}(x_i,\hat{s}_{i-1})$ denotes the summation of the external forces acting on the vehicle and is defined as
\begin{equation}
\begin{aligned}
F_\text{ext}(x_i,\hat{s}_{i-1})=&-m_i g (\sin(\alpha(s_i))+c_r ) \\&-\tfrac{1}{2}\rho A_v C_{D}(\hat{s}_{i-1}-s_{i}-l_i) v_i^2.
\label{eq:F_ext_definition}
\end{aligned}
\end{equation}
The control input is additionally bounded by a soft constraint in order to allow braking only if necessary, i.e., when the safety constraint (see section~\ref{sec:safety_constraint}) is activated or the braking is required by the platoon coordinator. This is formulated as follows:
\begin{equation}
a_i+\epsilon_i \geq \min(a_{\text{c},i},\bar{a}_i), \; \epsilon_i \geq 0, 
\label{eq:constraint_on_input_soft}
\end{equation}
where $\epsilon_i$ is the softening variable and $a_{\text{c},i}$ is the coasting acceleration (i.e., no braking and fuel injection) and is defined as: 
\begin{equation}
a_{\text{c},i}=\frac{P_{i,\text{min}}}{m_i v_i}+\frac{F_\text{ext}(x_i,\hat{s}_{i-1})}{m_i}.
\label{eq:a_costing_definition}
\end{equation}

In the MPC formulation we refer to the constraint \eqref{eq:constraint_on_input},\eqref{eq:F_ext_definition} as ${a}_i(t_j|t_k) \in \mathcal{A}_i(x_i(t_j|t_k))$ and to the soft constraint \eqref{eq:constraint_on_input_soft},\eqref{eq:a_costing_definition} as $a_i(t_j|t_k)+\epsilon_i(t_j|t_k) \in \mathcal{A}_{\text{e},i}(x_i(t_j|t_k))$. 

The speed is bounded according to the constraint \eqref{eq:speed_limit_dp} as
\begin{equation}
v_{\text{min}}({s}_i(t_j|t_k)) \leq v_i(t_j|t_k) \leq v_{\text{max}}({s}_i(t_j|t_k)). \nonumber
\end{equation}
In the MPC formulation, we refer to this constraint as $v_i(t_j|t_k) \in \mathcal{V}(s_i(t_j|t_k))$.

\subsection{Safety constraint}
\label{sec:safety_constraint}
The platoon is intended to operate on standard highways where other vehicles are present. The designed controller therefore should be able to cope with cases where the platoon behavior deviates from the predicted one because of internal disturbances (e.g., gear shift) or external events (e.g., related to the traffic situation or a vehicle cutting into the platoon). In this section we focus on the safety problem, leaving to further work the study of how such  events should be handled (i.e., autonomously or switching to manual driving). 

The platoon is considered safe if, whatever a vehicle in the platoon does, there exists an input for all the follower vehicles such that collision can be avoided. The safety of the platoon is guaranteed by ensuring that the state of each vehicle lies within a safety set and it is firstly studied by considering two adjacent vehicles and later extended to the entire platoon. In here we consider the following vehicle continuous dynamics:
\begin{equation}
\dot{\tilde{x}}_i=
\left[\begin{array}{c}
\dot{\tilde{v}}_i \\
\dot{\tilde{s}}_i 
\end{array}\right]
=f(\tilde{x}_i,\tilde{a}_i)=
\left[\begin{array}{c}
\tilde{a}_i \\
\tilde{v}_i
\end{array}\right],
\label{eq:safety_model}
\end{equation}
where $\tilde{v}_i$, $\tilde{s}_i$ and $\tilde{a}_i$ are the speed, position and acceleration of vehicle $i$, respectively. 

Let us now focus on the dynamics of two adjacent vehicles described by
\begin{equation}
\begin{aligned}
\left[\begin{array}{c}
\dot{\tilde{x}}_{i-1} \\
\dot{\tilde{x}}_i
\end{array}\right]&=F(\tilde{x}_{i-1},\tilde{x}_i,\tilde{a}_{i-1},\tilde{a}_i) =
\left[\begin{array}{c}
f(\tilde{x}_{i-1},\tilde{a}_{i-1}) \\
f(\tilde{x}_{i},\tilde{a}_{i})
\end{array}\right],
\label{eq:safety_model2}
\end{aligned}
\end{equation}
where the acceleration of the current vehicle $\tilde{a}_i$ is the control input, while the acceleration of the previous vehicle $\tilde{a}_{i-1}$ is the exogenous input that can be regarded as a disturbance. We also introduce the admissible set \mbox{$\tilde{\mathcal{X}}=\{[\tilde{x}_{i-1}^\text{T} \, \tilde{x}_{i}^\text{T}]^\text{T}: \tilde{v}_{i-1} \geq 0, \, \tilde{v}_i \geq 0, \, \tilde{s}_{i-1}-\tilde{s}_{i} \geq l_{i-1}\}$} as the set of all admissible states, where $l_i$ denotes the length of vehicle $i$. In order to obtain a closed form of the safety set, the following conservative approximations of the the exogenous and control inputs are introduced:
\begin{subequations}
\begin{align}
\tilde{a}_{i-1} \in \mathcal{A}^\text{p}(\tilde{x}_{i-1}) &= \begin{cases} [\underline{a}_{\text{min},i},\overline{a}_{\text{max},i}], &\; \text{if } \tilde{v}_{i-1}>0, \\ 
[0,\overline{a}_{\text{max},i}], & \; \text{if } \tilde{v}_{i-1}=0, \end{cases} \label{eq:safety_constraint_disturbance}\\ 
\tilde{a}_i \in \mathcal{A}^\text{f}(\tilde{x}_i)&=\begin{cases} [\overline{a}_{\text{min},i},\underline{a}_{\text{max},i}], &\; \text{if } \tilde{v}_i>0, \\ 
[0,\underline{a}_{\text{max},i}], &\; \text{if } \tilde{v}_i=0, \end{cases} \label{eq:safety_constraint_control_input}
\end{align}
\end{subequations}
where $\underline{a}_{\text{min},i}$, $\overline{a}_{\text{min},i}$, $\underline{a}_{\text{max},i}$ and $\overline{a}_{\text{max},i}$ are lower and upper bounds on the minimum and maximum possible accelerations of vehicle $i$, respectively. Such bounds are computed under reasonable assumptions on the vehicles and road properties, i.e., the vehicles' speed is limited ($0 \leq \tilde{v}_i \leq v_\text{max}$), the admissible vehicles' weight is bounded (\mbox{$m_i \in M$}) and the road slope $\alpha$ is bounded ($|\alpha|  \leq \alpha_\text{max}$). For example, the bounds $\underline{a}_{\text{min},i}$ and $\overline{a}_{\text{min},i}$ can be computed as follows:
\begin{subequations}
\begin{align}
\underline{a}_{\text{min},i}=& \min_{0 \leq v \leq v_\text{max},m \in M, |\alpha|  \leq \alpha_\text{max},d \geq0} a_{\text{min},i}(v,m,\alpha,d), \nonumber \\
\overline{a}_{\text{min},i}=& \max_{0 \leq v \leq v_\text{max},m \in M,|\alpha|  \leq \alpha_\text{max},d \geq0} a_{\text{min},i}(v,m,\alpha,d), \nonumber 
\end{align}
\end{subequations}
where 
\begin{equation}
a_{\text{min},i}(v,m,\alpha,d)=-\mu \eta_i g -g \sin(\alpha) -c_r - \tfrac{1}{2} \rho A_v C_D (d) v^2 \nonumber
\end{equation}
and $-\mu \eta_i g$ represents the maximum braking capacity of vehicle $i$. Note that, due to the definition of the bounds and because of the dominance of the $-\mu \eta_i g$ term in the definition of $a_{\text{min},i}$, the following inequalities hold:
\begin{subequations}
\begin{align}
\underline{a}_{\text{min},i} &\leq \overline{a}_{\text{min},i} \leq 0, \label{eq:safety_relation_amin} \\
\underline{a}_{\text{max},i} &\leq \overline{a}_{\text{max},i}. \label{eq:safety_relation_amax} 
\end{align}
\label{eq:safety_relation} 
\end{subequations}

In order to guarantee the safety of the subsystem \eqref{eq:safety_model2}, we should guarantee that the state $[\tilde{x}_{i-1}^\text{T} \, \tilde{x}_i^\text{T}]^\text{T}$ always lies in a safety set $\mathcal{S}$ included in $\tilde{\mathcal{X}}$, for any admissible trajectory of the previous vehicle. 
\begin{figure}
\begin{center}
\includegraphics{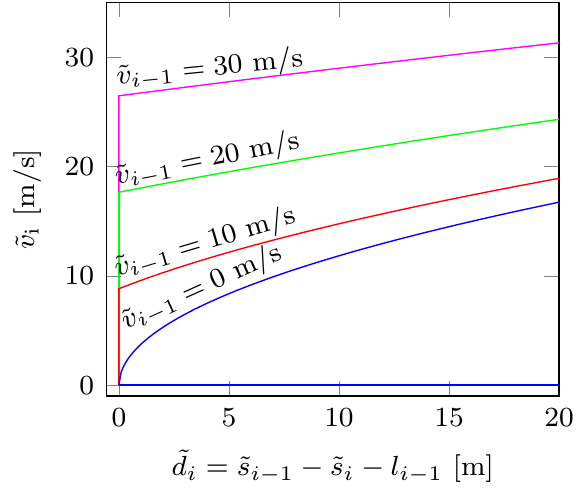}
\caption{Contour plot of the safety set boundary $\partial \mathcal{S}$. The variable $\tilde{d}_i$ denotes the distance between the two adjacent vehicles.}
\label{fig:matlab_safety_set}
\end{center}
\end{figure}
We now define the safety set $\mathcal{S} \subseteq \tilde{\mathcal{X}}$, displayed in \mbox{Figure~\ref{fig:matlab_safety_set}}, as 
\begin{equation}
\mathcal{S}=\{[\tilde{x}_{i-1}^\text{T} \, \tilde{x}_{i}^\text{T}]^\text{T}: g_j(\tilde{x}_{i-1},\tilde{x}_i) \geq 0, \;j=1,...,4\},
\label{eq:safety_definition_set}
\end{equation}
where
\begin{equation}
\begin{aligned}
&g_1(\tilde{x}_{i-1},\tilde{x}_i)=\tilde{s}_{i-1}-\tilde{s}_{i}-l_{i-1}-\frac{\tilde{v}_{i-1}^2}{2\underline{a}_{\text{min},i-1}}+\frac{\tilde{v}_i^2}{2\overline{a}_{\text{min},i}}, \\ 
&g_2(\tilde{x}_{i-1},\tilde{x}_i)=\tilde{s}_{i-1}-\tilde{s}_{i}-l_{i-1} ,  \\
&g_3(\tilde{x}_{i-1},\tilde{x}_i)=\tilde{v}_{i-1} ,  \\
&g_4(\tilde{x}_{i-1},\tilde{x}_i)=\tilde{v}_i 
\end{aligned}
\label{eq:safety_definition_g}
\end{equation}
and we state the following result:
\begin{lemma}
Given the dynamic system \eqref{eq:safety_model2} and the constraints \eqref{eq:safety_constraint_disturbance} and \eqref{eq:safety_constraint_control_input} on the exogenous and control inputs respectively, there exists a control law $\tilde{a}_i=\phi([\tilde{x}_{i-1}^\text{T} \, \tilde{x}_i^\text{T}]^\text{T}) \in \mathcal{A}^\text{f}(\tilde{x}_i)$ such that for all $[\tilde{x}_{i-1}^\text{T}(t_0) \, \tilde{x}_i^\text{T}(t_0)]^\text{T} \in \mathcal{S}$ and $\tilde{a}_{i-1} \in \mathcal{A}^\text{p}(\tilde{x}_{i-1})$, the condition $[\tilde{x}_{i-1}^\text{T}(t) \, \tilde{x}_i^\text{T}(t)]^\text{T} \in \mathcal{S}$ holds for all $t \geq t_0$. In other words, $\mathcal{S}$ is a robust controlled invariant set  \cite{blanchini_1999}.
\end{lemma}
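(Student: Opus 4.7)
The plan is to apply the standard Nagumo-type characterisation of robust controlled invariance: since $\mathcal{S}$ is defined by the four smooth inequalities $g_j \geq 0$ in \eqref{eq:safety_definition_g}, $\mathcal{S}$ is robustly controlled invariant for the dynamics \eqref{eq:safety_model2} under the input sets \eqref{eq:safety_constraint_disturbance}--\eqref{eq:safety_constraint_control_input} precisely when, at every boundary point, one can pick $\tilde{a}_i \in \mathcal{A}^\text{f}(\tilde{x}_i)$ so that $\dot{g}_j \geq 0$ for every active index $j$ and every admissible disturbance $\tilde{a}_{i-1} \in \mathcal{A}^\text{p}(\tilde{x}_{i-1})$. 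Building the control law $\phi$ therefore reduces to a face-by-face verification, followed by a consistency check at the corners where two faces meet.

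I would start with the two velocity faces $g_3=0$ and $g_4=0$, which are dispatched directly from the definition of the input sets: on $\tilde{v}_{i-1}=0$ the disturbance set \eqref{eq:safety_constraint_disturbance} forces $\tilde{a}_{i-1}\geq 0$, so $\dot{g}_3\geq 0$ holds automatically, and on $\tilde{v}_i=0$ the control set \eqref{eq:safety_constraint_control_input} allows us to pick $\tilde{a}_i\geq 0$, so $\dot{g}_4\geq 0$. The face $g_2=0$ (vehicles physically touching) is handled by combining $g_2=0$ with $g_1\geq 0$ to extract the inequality $\tilde{v}_i^2/|\overline{a}_{\text{min},i}|\leq \tilde{v}_{i-1}^2/|\underline{a}_{\text{min},i-1}|$; under the natural ordering $|\overline{a}_{\text{min},i}|\leq |\underline{a}_{\text{min},i-1}|$, which follows from \eqref{eq:safety_relation_amin} in the homogeneous case and is easy to argue in the heterogeneous one, this yields $\tilde{v}_i\leq \tilde{v}_{i-1}$ and hence $\dot{g}_2 = \tilde{v}_{i-1}-\tilde{v}_i\geq 0$ without any action from the controller.

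The crux of the proof is the face $g_1=0$. A direct differentiation along \eqref{eq:safety_model2} gives
\begin{equation*}
\dot{g}_1 \;=\; \tilde{v}_{i-1}\left(1 - \frac{\tilde{a}_{i-1}}{\underline{a}_{\text{min},i-1}}\right) \,-\, \tilde{v}_i\left(1 - \frac{\tilde{a}_i}{\overline{a}_{\text{min},i}}\right),
\end{equation*}
in which the coefficient of $\tilde{a}_{i-1}$ is non-positive because $\tilde{v}_{i-1}\geq 0$ and $\underline{a}_{\text{min},i-1}<0$. The worst-case disturbance is therefore $\tilde{a}_{i-1}=\underline{a}_{\text{min},i-1}$ (maximum leader braking), whereupon the first bracket vanishes and the inequality $\dot{g}_1\geq 0$ reduces to $\tilde{v}_i\bigl(\tilde{a}_i/\overline{a}_{\text{min},i}-1\bigr)\geq 0$. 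This is met with equality by the choice $\tilde{a}_i=\overline{a}_{\text{min},i}$, which lies in $\mathcal{A}^\text{f}(\tilde{x}_i)$ by \eqref{eq:safety_relation}. The candidate control law $\phi$ is then defined piecewise: apply $\tilde{a}_i=\overline{a}_{\text{min},i}$ on the $g_1$ face, $\tilde{a}_i=0$ whenever $g_4$ is active, and any admissible value in the interior of $\mathcal{S}$.

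The step I expect to be the main obstacle is the corner analysis, where several $g_j$ are active simultaneously and the control choices above must be reconciled. The key corner is $g_1=g_4=0$: there $\tilde{v}_i=0$ trivialises the $g_1$ tangency computation, so $\tilde{a}_i=0$ (the value forced by $g_4$) still gives $\dot{g}_1\geq 0$; the corner $g_1=g_2=0$ follows from the $g_1$ analysis together with the $g_2$ inequality just derived; and the corner $g_2=g_3=0$ combined with $g_1\geq 0$ forces $\tilde{v}_i=0$ as well, so the configuration is stationary and every tangency condition holds trivially. Once each corner has been checked, the piecewise control $\phi$ fulfils the tangency condition on the whole boundary of $\mathcal{S}$, and the classical viability/invariance result recalled in \cite{blanchini_1999} yields that $\mathcal{S}$ is a robust controlled invariant set.
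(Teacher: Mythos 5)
Your proposal follows essentially the same route as the paper's proof: Nagumo's condition for robust controlled invariance, the same control law $\tilde{a}_i=\overline{a}_{\text{min},i}$ for $\tilde{v}_i>0$ (and $0$ at standstill), and the same face-by-face tangency checks, with the corner analysis you spell out being folded in the paper into its per-face cases by assuming the remaining $g_j\geq 0$. The only blemish is your claim that the coefficient of $\tilde{a}_{i-1}$ in $\dot{g}_1$ is non-positive --- it is $-\tilde{v}_{i-1}/\underline{a}_{\text{min},i-1}\geq 0$ --- but your conclusion that the worst case is maximum leader braking $\tilde{a}_{i-1}=\underline{a}_{\text{min},i-1}$ is nevertheless the correct one, so the argument stands.
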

\begin{proof}
By using Nagumo's theorem for robust controlled invariant sets \cite{blanchini_1999}, the lemma can be proved by showing that for all $[\tilde{x}_{i-1}^\text{T} \, \tilde{x}_i^\text{T}]^\text{T} \in \partial \mathcal{S}$ (defined as the boundary of $\mathcal{S}$) there exists an $\tilde{a}_i \in \mathcal{A}^\text{f}$ such that, for all $\tilde{a}_{i-1} \in \mathcal{A}^\text{p}$, the relation
\begin{equation}
\nabla g_j(\tilde{x}_{i-1},\tilde{x}_i)^\text{T}F(\tilde{x}_i,\tilde{x}_{i-1},\tilde{a}_{i-1},\tilde{a}_i) \geq 0
\end{equation}
holds for all $j$ such that $g_j(\tilde{x}_{i-1},\tilde{x}_i)=0$. Because of the structure of the problem, the control input $\tilde{a}_i$ is chosen as
\begin{equation}
\tilde{a}_i=\begin{cases} \overline{a}_{\text{min},i}, \; &\text{if } \tilde{v}_i>0, \\ 0, \; &\text{if } \tilde{v}_i=0,\end{cases}
\label{eq:safety_tilda_a_2_def}
\end{equation} 
for any $[\tilde{x}_{i-1}^\text{T} \, \tilde{x}_i^\text{T}]^\text{T} \in \partial \mathcal{S}$ and $\tilde{a}_{i-1} \in \mathcal{A}^\text{p}(\tilde{x}_{i-1})$. We organize the proof by considering the $[\tilde{x}_{i-1}^\text{T} \, \tilde{x}_i^\text{T}]^\text{T} \in \partial \tilde{\mathcal{S}}$ defined by the activation of each $g_j(\tilde{x}_{i-1},\tilde{x}_i) \geq 0$:
\begin{itemize}
\item for $[\tilde{x}_{i-1}^\text{T} \, \tilde{x}_i^\text{T}]^\text{T}$ such that $g_1(\tilde{x}_{i-1},\tilde{x}_i)=0$, and $g_j(\tilde{x}_{i-1},\tilde{x}_i) \geq 0$, for $j=2,3,4$,
\begin{subequations}
\begin{align}
\nabla g_1(\tilde{x}_{i-1},\tilde{x}_i)^\text{T}&F(\tilde{x}_{i-1},\tilde{x}_i,\tilde{a}_{i-1},\tilde{a}_i) \nonumber \\
=&\left(1- \frac{\tilde{a}_{i-1}}{\underline{a}_{\text{min},i-1}} \right)\tilde{v}_{i-1} -\left(1- \frac{\tilde{a}_i}{\overline{a}_{\text{min},i}} \right)\tilde{v}_i, \nonumber \\
=&\left(1- \frac{\tilde{a}_{i-1}}{\underline{a}_{\text{min},i-1}} \right)\tilde{v}_{i-1} \geq 0, \nonumber
\end{align}
\end{subequations}
where the equality and inequality hold because of how $\tilde{a}_i$ is defined by \eqref{eq:safety_tilda_a_2_def} and $g_3(\tilde{x}_{i-1},\tilde{x}_i) \geq 0$.
\item  for $[\tilde{x}_{i-1}^\text{T} \, \tilde{x}_i^\text{T}]^\text{T}$ such that $g_2(\tilde{x}_{i-1},\tilde{x}_i)=0$, and $g_j(\tilde{x}_{i-1},\tilde{x}_i) \geq 0$, for $j=1,3,4$,
\begin{subequations}
\begin{align}
\nabla g_2(\tilde{x}_{i-1},\tilde{x}_i)^\text{T}F(\tilde{x}_{i-1},\tilde{x}_i,\tilde{a}_{i-1},\tilde{a}_i)&=\tilde{v}_{i-1}-\tilde{v}_i  \geq 0, \nonumber
\end{align}
\end{subequations}
where the inequality holds by noticing that the combination of $g_1(\tilde{x}_{i-1},\tilde{x}_i) \geq 0$, $g_2(\tilde{x}_{i-1},\tilde{x}_i) = 0$ and the relation \eqref{eq:safety_relation_amin} gives $\tilde{v}_{i-1} \geq (\underline{a}_{\text{min},i}/\overline{a}_{\text{min},i}) \tilde{v}_i$.
\item for $[\tilde{x}_{i-1}^\text{T} \, \tilde{x}_i^\text{T}]^\text{T}$ such that $g_3(\tilde{x}_{i-1},\tilde{x}_i)=0$, and $g_j(\tilde{x}_{i-1},\tilde{x}_i) \geq 0$, for $j=1,2,4$,
\begin{subequations}
\begin{align}
\nabla g_3(\tilde{x}_{i-1},\tilde{x}_i)^\text{T}F(\tilde{x}_{i-1},\tilde{x}_i,\tilde{a}_{i-1},\tilde{a}_i)&=\tilde{a}_{i-1} \geq 0, \nonumber
\end{align}
\end{subequations}
where the inequality holds because of \eqref{eq:safety_constraint_disturbance}. The same can be verified in a similar way for $[\tilde{x}_{i-1}^\text{T} \, \tilde{x}_i^\text{T}]^\text{T}$ such that $g_4(\tilde{x}_{i-1},\tilde{x}_i)=0$ and \mbox{$g_j(\tilde{x}_{i-1},\tilde{x}_i) \geq 0 \text{ for } j=1,2,3$.} \qedhere
\end{itemize}
 \end{proof} 

The choice of the safety set guarantees that the follower vehicle can react to the emergency braking maneuver of its predecessor, such that both vehicles come to a standstill without colliding. We now extend the result in Lemma 1 to the safety of the whole platoon. More precisely, we proof that whatever a vehicle does, there exists an input for all the follower vehicles, such that collision can be avoided. This is formalized by the following theorem:
\begin{theorem}
Consider a vehicle with index $i_0<N_\text{v}$ and all its follower vehicles $i \in \mathcal{I}=\{i_0+1,...,N_\text{v}\}$ satisfying the dynamics in \eqref{eq:safety_model}. Then, there exists a control law $\tilde{a}_i = \phi(\tilde{x}_i,\tilde{x}_{i-1}) \in \mathcal{A}^\text{f}(\tilde{x}_i), \, i \in \mathcal{I}$ such that for all $[\tilde{x}_{i-1}^\text{\rm{T}}(t_0) \, \tilde{x}_i^\text{\rm{T}}(t_0)]^\text{\rm{T}} \in \mathcal{S}$ and $\tilde{a}_{i_0} \in \mathcal{A}^\text{p}(\tilde{x}_{i_0})$, the condition $[\tilde{x}_{i-1}^\text{\rm{T}}(t) \, \tilde{x}_i^\text{\rm{T}}(t)]^\text{\rm{T}} \in \mathcal{S}$ holds for all $t \geq t_0$ and all $i \in \mathcal{I}$.
\end{theorem}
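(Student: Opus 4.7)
The plan is to prove Theorem 1 by induction on the vehicle index $i \in \mathcal{I}$, using Lemma 1 as the pairwise building block. The key structural observation that makes the induction work is that the set of admissible control accelerations of a follower is contained in the set of admissible disturbance accelerations of a predecessor, namely $\mathcal{A}^\text{f}(\tilde{x}) \subseteq \mathcal{A}^\text{p}(\tilde{x})$. This inclusion follows directly from the inequalities \eqref{eq:safety_relation_amin} and \eqref{eq:safety_relation_amax}, which give $[\overline{a}_{\text{min},i},\underline{a}_{\text{max},i}] \subseteq [\underline{a}_{\text{min},i},\overline{a}_{\text{max},i}]$ (with the case $\tilde{v} = 0$ handled identically in both definitions). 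I would state and verify this inclusion as the first step since it is the crucial hinge of the whole argument.

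For the base case $i = i_0 + 1$, the predecessor is the vehicle with index $i_0$, whose acceleration $\tilde{a}_{i_0}$ is by assumption an exogenous input in $\mathcal{A}^\text{p}(\tilde{x}_{i_0})$. Applying Lemma 1 directly to the pair $(i_0,i_0+1)$ produces a control law $\tilde{a}_{i_0+1} = \phi(\tilde{x}_{i_0},\tilde{x}_{i_0+1}) \in \mathcal{A}^\text{f}(\tilde{x}_{i_0+1})$ rendering the safety set $\mathcal{S}$ robustly controlled invariant for that pair.

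For the inductive step, suppose the claim holds up to index $i-1$, so that a control law $\tilde{a}_{i-1} \in \mathcal{A}^\text{f}(\tilde{x}_{i-1})$ has been constructed which keeps $[\tilde{x}_{i-2}^\text{T} \, \tilde{x}_{i-1}^\text{T}]^\text{T} \in \mathcal{S}$ for all time. From the viewpoint of vehicle $i$, the acceleration $\tilde{a}_{i-1}$ of its predecessor is an exogenous input. By the inclusion $\mathcal{A}^\text{f}(\tilde{x}_{i-1}) \subseteq \mathcal{A}^\text{p}(\tilde{x}_{i-1})$ established above, whatever value $\tilde{a}_{i-1}$ takes under the inductive control law, it satisfies $\tilde{a}_{i-1} \in \mathcal{A}^\text{p}(\tilde{x}_{i-1})$, exactly the hypothesis required to invoke Lemma 1 on the pair $(i-1,i)$. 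This yields a control law $\tilde{a}_i = \phi(\tilde{x}_{i-1},\tilde{x}_i) \in \mathcal{A}^\text{f}(\tilde{x}_i)$ that renders $\mathcal{S}$ invariant for the pair $(i-1,i)$, completing the inductive step.

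The main obstacle, and the only substantive content beyond mechanical induction, is verifying the inclusion $\mathcal{A}^\text{f}(\tilde{x}) \subseteq \mathcal{A}^\text{p}(\tilde{x})$ cleanly and interpreting it correctly: the predecessor's constraint is deliberately constructed to bound an \emph{uncontrolled} vehicle's acceleration (a worst-case disturbance envelope), while the follower's constraint is a \emph{guaranteed-achievable} envelope that must hold for every admissible mass, slope and drag realization. It is precisely because the achievable set is an \emph{inner} approximation and the disturbance set an \emph{outer} approximation of the true acceleration set that the inclusion holds, and this is what allows a vehicle two or more positions down the string to treat its immediate predecessor as if it were an arbitrary exogenous leader. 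Once this observation is in place, the rest is a one-line induction, and no further analysis of the geometry of $\mathcal{S}$ is required because Lemma 1 has already done that work.
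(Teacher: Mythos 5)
Your proposal is correct and follows essentially the same argument as the paper: the base case is a direct application of Lemma 1 to the pair $(i_0,i_0+1)$, the hinge is the inclusion $\mathcal{A}^\text{f}(\tilde{x}) \subseteq \mathcal{A}^\text{p}(\tilde{x})$ obtained from \eqref{eq:safety_relation}, and the conclusion follows by induction along the string. Your proposal simply spells out the inductive step and the interpretation of the inclusion in more detail than the paper's two-line proof.
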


\begin{proof}
The application of Lemma 1 for $i=i_0+1$ proves the existence of an input $\tilde{a}_i \in \mathcal{A}^\text{f}(\tilde{x}_i)$ that ensures that $[\tilde{x}_{i-1}^\text{T}(t) \, \tilde{x}_i^\text{T}(t)]^\text{T} \in \mathcal{S}$ for all $t \geq t_0$. Then, by noting that $\mathcal{A}^\text{f}(\tilde{x}_i) \subseteq \mathcal{A}^\text{p}(\tilde{x}_i)$ according to \eqref{eq:safety_relation}, it follows that $\tilde{a}_i \in \mathcal{A}^\text{p}(\tilde{x}_i)$.The theorem is then proven by induction over the vehicle index, hereby repetitively applying Lemma 1. 
\end{proof}

This result is adapted to the MPC formulation in order to guarantee the safety of the platoon. More precisely, each vehicle, knowing the assumed state trajectory of the vehicle ahead, can compute the safety set for its own predicted state. By taking into account that, according to the definition of the assumed state in \eqref{eq:safety_definition_assumed}, the real state of the previous vehicle is know with a one step delay, the safety set $\mathcal{S}$ translates to the following safety constraint on each follower vehicle state: 
\begin{equation}
\begin{aligned}
s_i(t_{j+1}|t_k)-&\frac{v_i^2(t_{j+1}|t_k)}{2 \overline{a}_{\text{min},i}} \\
\leq &\hat{s}_{i-1}(t_{j-1}|t_k)-\frac{\hat{v}_{i-1}^2(t_{j-1}|t_k)}{2 \underline{a}_{\text{min},i}}-l_{i-1},
\end{aligned}
\label{eq:safety_constraint}
\end{equation}
for $i=2,...,N_v$. Note that for safety purpose only the safety constraint for $j=k$ is necessary. In fact it guarantees that, if at the update time $t_k$ the current state of each follower vehicle is safe, then it is going to be safe also at the update time $t_{k+1}$. However, the safety constraint for $j>k$ gives optimal trajectories that are safe over the whole horizon and therefore produces a smoother and more fuel-efficient behavior of the platoon. In the MPC formulation, we refer to the safety constraint \eqref{eq:safety_constraint} as $f_\text{safe}(x_i(t_{j+1}|t_k)) \geq 0$.


\subsection{Cost function}
The objective of the vehicle controller layer is to follow the optimal trajectory and the gap policy requirement provided by the platoon coordinator layer. This can be formulated by introducing the following cost function: 
\begin{equation}
\begin{aligned}
J_i^{\text{MPC}}({x}_i(\cdot,t_{k}),&{a}_i(\cdot,t_{k}),{\epsilon}_i(\cdot,t_{k}))\\
=\sum_{j=k}^{k+H_{\text{MPC}}-1}    &\left|\left|x_i(t_j|t_k)-\hat{x}_{i-1}(t_{j-T_i}|t_k)\right|\right|_{\zeta_i Q} \\
+& \left|\left|x_i(t_j|t_k)-\bar{x}_i(t_j|t_k)\right|\right|_{(1-\zeta_i) Q} \\
+&\left|\left|a_i(t_j|t_k)-\bar{a}_i(t_j|t_k)\right|\right|_{R} \\
+&\left|\left|\epsilon_i(t_j|t_k)\right|\right|_{P}, \nonumber
\end{aligned}
\label{eq:cost_function}
\end{equation}
where 
\begin{equation}
\zeta_i= \begin{cases} 0, \; &\text{if } i=1, \\ \bar{\zeta}, \; &\text{if } i=2,...,N_v \end{cases}
\end{equation}
and $T_i$ represents the discretized version of the time gap $\tau_i$ (i.e., $T_i=\lfloor \tau_i/ \Delta t_\text{MPC} \rfloor$). The parameters $Q$, $R$ and $\bar{\zeta} \in [0, \, 1]$ can be chosen in order to have a good trade-off between reference trajectory, gap policy tracking and actuators excitation. The weight $P$ related to the softening-variable of the constraint \eqref{eq:constraint_on_input_soft} is chosen relatively large such that only the activation of the safety constraint $f_\text{safe}(x_i(t_{j+1}|t_k)) \geq 0$ can require a significant braking force.

\subsection{Model predictive control formulation} \label{sec:model_predictive_control_forumation}
We now have all the elements to formulate the MPC problem:
\begin{subequations} 
\allowdisplaybreaks
\begin{align}
\displaystyle \min_{{a}_i(\cdot,t_{k}),{\epsilon}_i(\cdot,t_{k})} \;
&J_i^{\text{MPC}}({x}_i(\cdot,t_{k}),{a}_i(\cdot,t_{k}),{\epsilon}_i(\cdot,t_{k}))\\
\text{subj. to} \quad
&  x_{i}(t_{j+1}|t_k)= A x_{i}(t_{j}|t_{k})+ B a_i(t_{j}|t_{k}), \\
& {a}_i(t_j|t_k) \in \mathcal{A}_i(x_i(t_j|t_k)), \label{eq:MPC_formulation_a} \\
& a_i(t_j|t_k)+\epsilon_i(t_j|t_k) \in \mathcal{A}_{\text{e},i}(x_i(t_j|t_k)),\label{eq:MPC_formulation_a2} \\
& v_i(t_j|t_k) \in \mathcal{V} (s_i(t_j|t_k)), \label{eq:MPC_formulation_v}\\
& f_\text{safe}(x_i(t_{j+1}|t_k)) \geq 0, \text{ if } i\geq 2, \label{eq:safe_const}  \\
& \epsilon_i(t_j|t_k) \geq 0, \\
& {x}_i(t_{k}|t_k)=x_{i}(t), \label{eq:mpc_init} 
\end{align}
\end{subequations}
where $j=k,...,k+H_{\text{MPC}}-1$ and \eqref{eq:mpc_init} represents the initial condition of the MPC problem. For implementation purpose the state-dependent constraint set in \eqref{eq:MPC_formulation_a}, \eqref{eq:MPC_formulation_a2} and \eqref{eq:MPC_formulation_v} will be replaced respectively by \mbox{$\mathcal{A}_i(\hat{x}_i(t_j|t_k))$}, \mbox{$\mathcal{A}_{\text{e},i}(\hat{x}_i(t_j|t_k))$} and \mbox{$\mathcal{V} (\hat{s}_i(t_j|t_k))$}. Taking into account that the safety constraint \eqref{eq:safe_const} is quadratic and convex, the MPC problem can be recasted into a quadratic constraint quadratic programming (QCQP) problem for which efficient solvers exist. 

The output of the vehicle controller is the desired acceleration $a_{i}^*(t_k)$ (defined as $a_{i}^*(t_k)=a_{i}^*(t_k|t_k)$, where $a_{i}^*(\cdot|t_k)$ is the optimal input trajectory resulting from the MPC) and a boolean variable $a_{\text{br},i}$ defined as 
\begin{equation}
a_{\text{br},i}=\begin{cases} 1, \; \text{if } a_{i}^*(t_k) < a_{\text{c},i}^*(t_k|t_k), \\
0, \; \text{if } a_{i}^*(t_k) \geq a_{\text{c},i}^*(t_k|t_k),  \end{cases}
\end{equation}
that states if the desired acceleration should be tracked by the BMS or the EMS. 

\section{Performance analysis of the platoon coordinator} \label{sec:analysis_platoon_coordinator}
In this section we analyze the performance of the platoon coordinator (as presented in Section \ref{sec:platoon_coordinator} and shown in Figures \ref{fig_architecture} and \ref{fig_formulation_architecture}) by focusing on fuel-efficiency. We compare its performance with other standard controller setups. To make the analysis independent from the low-level tracking strategy, we assume in this section that the HDVs can follow exactly the speed trajectories and spacing policies defined by the high-level controllers.

\subsection{Experiment setup} \label{sec:fuel_efficiency_analysis}
\label{sec:analysis_platoon_coordinator_setup}
The comparison is done by using as benchmark the scenario introduced in Section~\ref{sec:motivational_experiment}. We therefore consider a platoon of two HDVs driving over the $45$ km road stretch shown in Figure~\ref{fig:matlab_road_topology45} and investigate the controller performance for both homogeneous and heterogeneous platoons. The performance metrics chosen to compare the different control configurations are the energy and the fuel consumed by the HDVs. In some comparisons the consumed energy is preferred over the consumed fuel because it can be directly related to the energies dissipated by the various forces (i.e., gravity, rolling, drag and braking forces). 

The control configurations considered in the comparisons include three control strategies and three gap policies. In detail, the following control strategies are considered:
\begin{itemize}
\item cruise control (CC): the first vehicle keeps the constant reference speed $v_\text{CC}$ on low-grade slopes. If the uphill slope is too large to maintain constant speed, the engine generate the maximum power $P_{\text{max}}$ until the speed reaches $v_\text{CC}$ again. If the downhill slope is too large to maintain constant speed without braking, the engine coasts (i.e., does not inject any fuel, generating therefore the minimum power $P_\text{min}$) until the speed reaches $v_\text{CC}$ again. However, if the HDV reaches the speed limit $v_\text{max}$, the brakes are activated in order not to overcome it;
\item look-ahead control (LAC): the first vehicle exploits the slope information of the road ahead in order to minimize its own fuel consumption. 
\item cooperative look-ahead control (CLAC): the first vehicle follows the speed profile generated by the platoon coordinator proposed in this paper. 
\end{itemize}
The following gap policies are considered:
\begin{itemize}
\item space gap (SG): the second vehicle keeps a constant distance $d_\text{SG}$ from the first vehicle;
\item headway gap (HG): the second vehicle keeps a constant headway time $\tau_\text{HG}$ from the first vehicle, i.e., it keeps a distance proportional to its speed ($d_\text{HG}(t)=\tau_\text{HG} v_i(t)$);
\item time gap (TG):  the second vehicle keeps a constant time gap $\tau_\text{TG}$ from the first vehicle according to \eqref{eq:time_gap}.
\end{itemize}
In order to be able to maintain exactly the desired gap policies as previously assumed, the second vehicle is allowed to overcome the theoretical maximum engine power $P_{\text{max},i}$ and to brake if necessary. In addition, in order to obtain a fair comparison it is ensured, by tuning the trade-off parameter $\beta$ of the LAC and CLAC formulations (see \eqref{eq:cost_function}), that the different control strategies have the same average speed $\bar{v}$ and the parameters $d_\text{SG}$, $\tau_\text{HG}$ and $\tau_\text{TG}$ are chosen such that the vehicles in the different gap policies have the same distance when driving at constant speed $\bar{v}$ (i.e., $d_\text{SG}=\bar{v}\tau_\text{HG}=\bar{v} \tau_{\text{TG}}-l_1$). Finally in order to remove the influence of the residual kinematic energy, the initial and final speeds are constrained to be the same in all the controller configurations.

\subsection{Fuel-efficiency analysis for different control strategies} \label{sec:performance11}
In this section we present the results of the platoon behavior for the three different control strategies, while keeping a TG policy ($\tau_\text{TG}=1.4$ s). In the first part, as in the motivational example of Section~\ref{sec:motivational_experiment},  we focus on the homogeneous platoon scenario, while in the second part we consider two heterogeneous platoons (i.e., where the second vehicle is respectively heavier and lighter than the leading one).

\begin{table}[h]
\caption{Vehicle's paramters}
\label{tab:vehicle_parameters}
\begin{center}
\begin{tabular}{l|c}
Parameter & Value \\
\hline 
mass ($m_i$) & $40$ t \\
length ($l_i$)& $18$ m \\
roll coefficient ($c_r$)& $3\times 10^{-3}$ \\
vehicle cross-sectional area ($A_v$)& $10 \, \text{m}^2$ \\
maximum engine power ($P_{\text{max},i}$) & $298$ kW \\
minimum engine power ($P_{\text{min},i}$)& $-9$ kW 
\end{tabular}
\end{center}
\end{table}

We now consider a platoon of two identical vehicles, whose parameters values are displayed in Table~\ref{tab:vehicle_parameters}. We start the comparison by analyzing the comprehensive bar diagram displayed in Figure~\ref{fig:matlab_comparison_control_bars} representing the energy consumed by each vehicle of the platoon for the three control strategies (the corresponding fuel consumption is displayed in the central column of Table~\ref{tab:percentage_fuel_saved_1}). This energy is normalized respect to the energy consumed by a single vehicle driving alone using CC. The consumed energy is additionally split into various components representing the energy dissipated by each force, namely the gravity, roll, drag and braking force. We can first notice how the second vehicle, for all the control strategies, consumes less energy compared to the first one, due to the significant reduction of the energy associated to the drag force. Second, comparing the three control strategies, we can observe how the use of the LAC allows both vehicles to save energy, respectively $3.5$\% and $6.4$\% compared to the use of the CC. Instead, by switching from the LAC to the CLAC, the first vehicle consumes $0.1$\% more energy, while the second one saves $3.7$\% of energy; therefore the platoon, given by the union of the two vehicles, saves $3.6$\% of energy. This result is in line with our expectation since the LAC optimizes the fuel consumption of the first vehicle, while the CLAC targets the reduction of the fuel consumption of the entire platoon. Consequently, the saving of the CLAC strategy with respect to the LAC strategy are expected to increase for platoons of more vehicles. Going into the details of the various consumed energy components, first we notice that the gravity and roll energy components are the same for both vehicles for all the considered control strategies. This is due to the fact that the gravity energy depends only on the difference of altitude between the initial and final points, while the roll energy only depends on the driven distance that is the same by experiment design specification. The drag energy, instead, is significantly different for the two vehicles because of its dependence on the distance to the preceding vehicle, while it is approximately the same for the different control strategies. What significantly changes between the different control strategies is the energy dissipated by braking. 

\begin{figure}
\begin{center}
\includegraphics{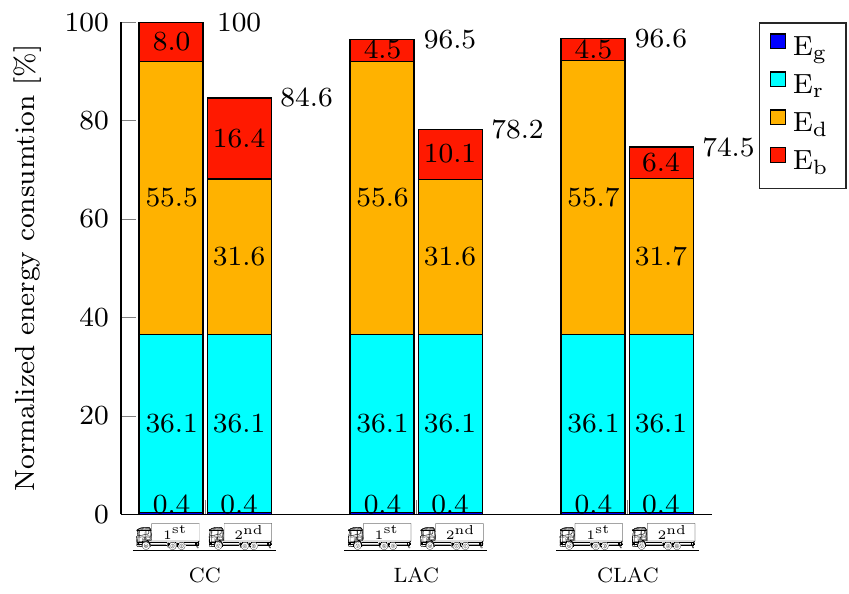}
\caption{Comparison of the energy consumed by each vehicle of a platoon ($m_1=m_2=40$ t), for the three control strategies, namely CC, LAC and CLAC, while keeping a TG policy,  driving along the $45$ km road displayed in Figure~\ref{fig:matlab_road_topology45}. Each bar represents the consumed energy normalized respect to the the energy consumed by a single vehicle driving alone using CC. The consumed energy is split into various components representing the energy dissipated by each force, namely the gravity ($E_\text{g}$), roll ($E_\text{r}$), drag ($E_\text{d}$) and braking ($E_\text{b}$) force.}
\label{fig:matlab_comparison_control_bars}
\end{center}
\end{figure}

In order to understand the role of the control strategies in the braking usage in Figure~\ref{fig:matlab_comparison_control_plot} we show part of the simulation results corresponding to the road highlighted as segment B in Figure~\ref{fig:matlab_road_topology45}. In this study we have chosen to focus on a downhill section because this is where the braking action is taking place. The comparison of the platoon behaviors follows:
\begin{itemize}
\item CC: during the downhill, starting from speed $v_\text{CC}$, the first HDV accelerates while coasting due to the large road grade. In the meantime the second vehicle has to brake slightly in order to maintain the time gap and compensate the reduced drag force compared to the first vehicle. At $38.1$ km, in order not to overcome the speed limit, both vehicles  need to brake significantly;
\item LAC: by exploiting the topography information of the road ahead, the first vehicle reduces its speed before the downhill by anticipating the coasting phase such that the speed limit is reached only when the slope grade is small enough to stop accelerating while coasting and therefore it avoids braking. The second vehicle, as in the CC case, has to brake slightly while the first vehicle is coasting but it also avoids the significant braking phase at the end of the downhill;
\item CLAC: since in this case the optimization is done considering the fuel consumption of both vehicles, with respect to the LAC case the first vehicle starts to loose speed earlier before the downhill. This allows it to fuel sightly during the downhill, allowing the second vehicle to coast meanwhile and, as in the LAC case, to reach the speed limit only when the slope grade is small enough to stop accelerating while coasting. In this case both HDVs do not need to brake. 
\end{itemize}
Note that, in the case of longer downhill segments, the lower speed bound does not allow the vehicle to decrease the speed enough before the downhill in order not to hit the upper speed limit during the downhill. This is why in some sections of the $45$ km benchmark road, in the LAC case, the first vehicle and, in the CLAC case, both vehicles still need to brake. 

\begin{figure}
\begin{center}
\includegraphics{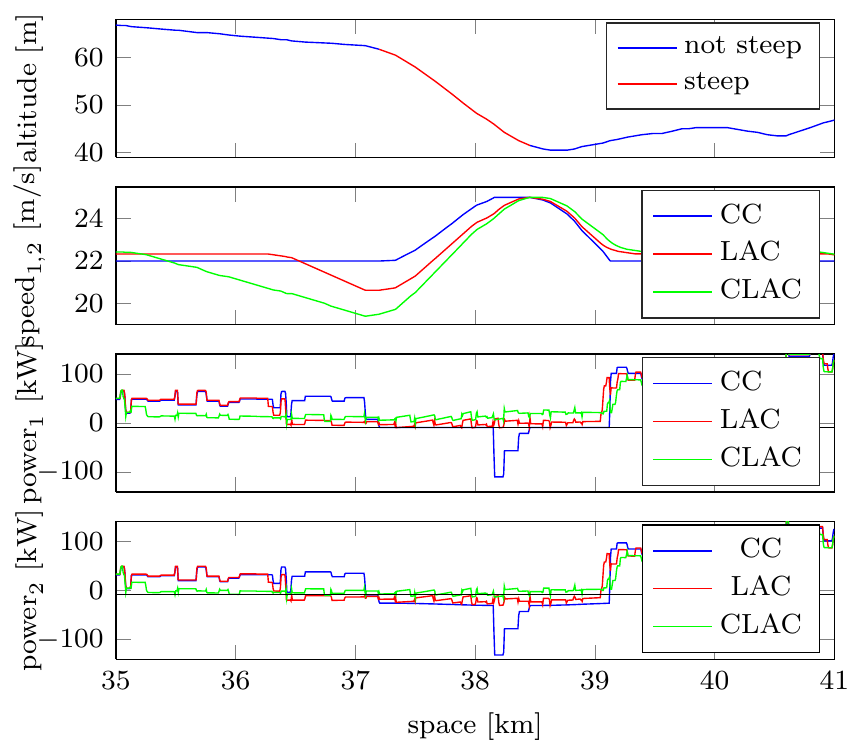}
\caption{Comparison of the behavior of an homogeneous platoon (i.e., $m_1=m_2=40$ t) for three different control strategies, namely CC, LAC and CLAC, while keeping a TG policy,  driving along the Sector B displayed in Figure~\ref{fig:matlab_road_topology45}. The first plot shows the road altitude, where the red color is used to highlight the sections too steep to keep a constant speed of $22$ m/s while respecting the power limit and avoiding braking; the second plot shows the speed profiles for the three control strategies followed by both vehicles (because of the SG policy); finally the third and forth plots show the summation between the generated power by the engine and the braking systems for the two vehicles and three control strategies; the black lines in such plots define the theoretical minimum and maximum engine power, respectively $P_{\text{min},i}$ and $P_{\text{max},i}$ (hence if the power crosses the lower power limit $P_{\text{min},i}$, the respective vehicle is braking).}
\label{fig:matlab_comparison_control_plot}
\end{center}
\end{figure}

So far we have considered the case of an homogeneous platoon. What we want to investigate now is the the role of the different control strategies in the case of heterogeneous platoons. To answer this question, in Table~\ref{tab:percentage_fuel_saved_1}  we have reported  the normalized fuel consumption for the cases of two heterogeneous platoons and the same homogeneous platoon previously considered. More in detail, the HDVs have the same powertrain, but their masses vary between $35$, $40$ and $45$ t. Analyzing the table we can notice how in the case of a heavier second vehicle the CLAC allows to save $10.8$\% of fuel compared to the CC, while, in the case of an lighter second vehicle, it allows to save $5.4$\%. However if we only analyze the last row we can note how, with the use of the CLAC, the order of the vehicles does not significantly change the normalized fuel consumption. 

Concluding, the proposed controller (CLAC) has a significant impact on the reduction of the energy and fuel consumption. In detail, the majority of the fuel saving is related to the reduction of energy dissipated by braking during the downhill sections. The impact of such a controller grows in the case of heavier follower vehicle. 

\begin{table}[h]
\caption{\rm Normalized fuel consumption of the vehicles in the platoon for different control strategies and scenarios (vehicle weights). The fuel is normalized respect to the fuel consumed by the respective HDV driving alone using CC. For the acronyms explanation refers to Section~\ref{sec:fuel_efficiency_analysis} [\%].}
\label{tab:percentage_fuel_saved_1}
\begin{center}
\begin{tabular}{C|C C|C C| C C}
& \multicolumn{2}{c|}{} & \multicolumn{2}{c|}{} & \multicolumn{2}{c}{}  \\
& \multicolumn{2}{c|}{\includegraphics[scale=1]{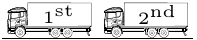}} & 
\multicolumn{2}{c|}{\includegraphics[scale=1]{platoon_two_vehicle_table_num}} & 
\multicolumn{2}{c}{\includegraphics[scale=1]{platoon_two_vehicle_table_num}}  \\
mass & $35$ t & $45$ t & $40$ t & $40$ t & $45$ t & $35$ t \Bstrut\\
   \hline 
CC & $100.0$ & $90.2$ & $100.0$ & $86.3$ &$100.0$ & $82.1$ \Tstrut\\
LAC & $97.6$ & $84.9$ & $96.9$ & $80.6$ & $96.3$ & $77.2$ \\
CLAC & $97.8$ & $78.0$ & $97.0$ & $77.4$ & $96.4$ & $76.7$ \\
\end{tabular}
\end{center}
\end{table}

\subsection{Fuel-efficiency analysis for different gap policies}
In the previous analysis we have always considered a TG policy. The aim of this section is to compare the platoon performance for different gap policies, namely the SG, HG and TG policies, while keeping the same control strategy (in the analysis we have considered CC). Note that in order to be able to follow the required gap policy the second vehicle is allowed to exceed the maximum engine power. In this section we only focus on the homogeneous platoon, since the results for an heterogeneous platoon are qualitatively the same. In Figure~\ref{fig:matlab_comparison_gaps_bars} we show the comprehensive bar diagram representing the normalized energy consumed by each vehicle of the platoon for the three gap policies, while using CC as control strategy. Since the first vehicle uses the same control strategy, the energy consumption defers only for the second vehicle. It is interesting to notice that, similarly to the comparisons done in the previous section, the main difference between the energy consumption of the second vehicles is related to the energy dissipated by braking. More in detail the HG policy allows the second vehicles to save $1$\% over the SG policy, while the TG policy allows to save an additional $1.6$\% of energy. In order to understand the role of the gap policy on the braking energy, we show the platoon behavior driving over a synthetic hill composed by an uphill section with constant slope grade, a flat section and a downhill section with constant slope grade. The platoon behavior for such a hill is shown in Figure~\ref{fig:matlab_comparison_gaps_plot}. Analyzing the second vehicle behavior for each gap policy, the following can be observed:
\begin{itemize}
\item TG policy: as argued in Section~\ref{sec:platoon_coordinator}, the time gap allows the vehicles to follow the same speed profile over space. That means that the generated forces and therefore the generated powers (because of the equal speed result) are equivalent except for a reduction of the air drag component in  the second vehicle. Therefore the power generated by the second vehicle, as can be observed in Figure~\ref{fig:matlab_comparison_gaps_plot}, is approximately a biased equivalent of that one generated by the first vehicle. 
\item SG policy: the space gap, instead, requires the vehicles to follow the same speed profile over time. An interesting consequence can be observed, for example, at the beginning of the uphill section shown in Figure~\ref{fig:matlab_comparison_gaps_plot}; as soon as the first vehicle enters the uphill section and decelerates because of limited engine power, the second vehicle, which is still in the flat section, has to brake in order to respect the space gap requirement. In general, excluding the offset given by the drag power, every time the slope increases (in Figure~\ref{fig:matlab_comparison_gaps_plot}, entering the uphill and leaving the downhill sections), the second vehicle has to generate less power than the first vehicle, while every time the slope decreases (in Figure~\ref{fig:matlab_comparison_gaps_plot}, leaving the uphill and entering the downhill sections) the second vehicle has to generate more power than the first vehicle. As a consequence, the second vehicle has respectively to brake and to exceed the power limit in order to follow the required SG policy. 
\item HG policy: the headway gap can be considered as a trade-off between a time gap and a space gap. In fact, for example, as soon as the first vehicle enters the uphill section and starts to decelerate, the distance between the two vehicles is allowed to decrease, but this decrease is not as fast as in the case of the time gap. 
\end{itemize}
The results obtained by the analysis of the platoon behavior in the case of the synthetic hill are valid also in the case of the original scenario. In conclusion, the time gap allows to save more energy compared to the space and headway gaps. In addition the time gap allows all the vehicles to follow the same speed trajectory in space and therefore it scales well with the number of vehicles in the platoon. The complete results for the normalized fuel consumption are reported in Table~\ref{tab:percentage_fuel_saved_2}.

\begin{figure}
\begin{center}
\includegraphics{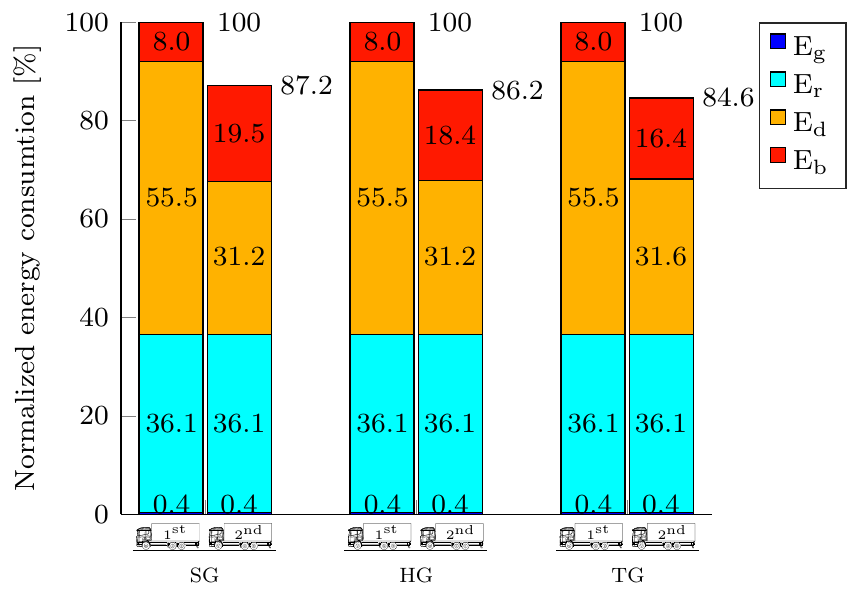}
\caption{Comparison of the energy consumed by each vehicle of an homogeneous platoon (i.e., $m_1=m_2=40$ t) for three different gap policies, namely space (SG), headway (HG) and time (TG) gap policies, while using CC as control strategy, driving along the $45$ km road displayed in Figure~\ref{fig:matlab_road_topology45}. For the plots explanation refer to the caption of Figure~\ref{fig:matlab_comparison_control_bars}.}
\label{fig:matlab_comparison_gaps_bars}
\end{center}
\end{figure}

\begin{figure}
\begin{center}
\includegraphics{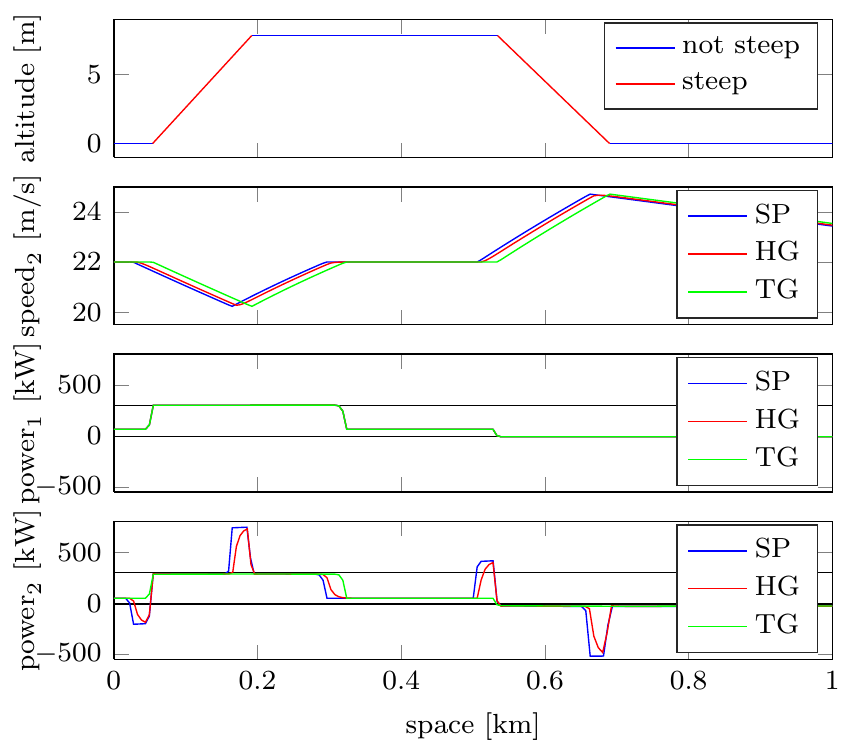}
\caption{Comparison of the behavior of a homogeneous platoon (i.e., \mbox{$m_1=m_2=40$ t)} for three different gap policies, namely space (SG), headway (HG) and time (TG) gap policies, while using CC as control strategy, driving over a synthetic hill. For the plots explanation refer to the caption of Figure~\ref{fig:matlab_comparison_control_plot}; note that the second plot shows only the speed trajectories of the second vehicle (the speed trajectory of the first vehicle coincides with that one of the second vehicle in the case of TG policy).}
\label{fig:matlab_comparison_gaps_plot}
\end{center}
\end{figure}

\begin{table}[h]
\caption{\rm Normalized fuel consumption of the vehicles in the platoon for different control strategies and gap policies. The fuel is normalized respect to the fuel consumed by the respective HDV driving alone using CC. For the acronyms explanation refers to Section~\ref{sec:fuel_efficiency_analysis} [\%].}
\label{tab:percentage_fuel_saved_2}
\begin{center}
\begin{tabular}{C|C C|C C| C C}
& \multicolumn{2}{c|}{SG} & \multicolumn{2}{c|}{HG} & \multicolumn{2}{c}{TG}  \\
& \multicolumn{2}{c|}{ \includegraphics[scale=1]{platoon_two_vehicle_table_num}} & 
\multicolumn{2}{c|}{ \includegraphics[scale=1]{platoon_two_vehicle_table_num}} & 
\multicolumn{2}{c}{\includegraphics[scale=1]{platoon_two_vehicle_table_num}}  \Bstrut\\
   \hline 
CC & $100.0$ & $88.6$ & $100.0$ & $87.7$ &$100.0$ & $86.3$ \Tstrut\\
LAC & $96.9$ & $82.7$ & $96.9$ & $81.9$ & $96.9$ & $80.6$ \\
CLAC & $97.0$ & $80.4$ & $97.0$ & $79.3$ & $97.0$ & $77.4$ \\
\end{tabular}
\end{center}
\end{table}

\section{Performance analysis of the vehicle controller} \label{sec:analysis_vehicle_controller}
In this section we analyze the performance of the vehicle controller layer (as presented in Section \ref{sec:trajectory_tracking} and shown in Figures \ref{fig_architecture} and \ref{fig_formulation_architecture}) by focusing on the safety aspect. The analysis is based on the simulation result displayed in Figure~\ref{fig:matlab_performance_braking} and Figure~\ref{fig:matlab_performance_braking_2}, where the leading HDV of a three vehicles platoon driving on a flat road brakes repeatedly with different braking profiles. Here we assume that the leading vehicle in the braking phases is manually driven and, therefore, the control system does not know a priori the braking profile. The considered vehicles are identical with the parameters as defined in Table~\ref{tab:vehicle_parameters}.

\subsection{Safety analysis} 
Here we focus on the safety analysis of the distributed vehicle controller layer and, in particular, we analyze the role of the safety constraint in various situations. In Figure~\ref{fig:matlab_performance_braking}, the leading vehicle is braking with deceleration of $1$, $2$ and $3$ m/s$^2$ for $0.9$ s at respectively $5$, $25$ and $55$ s. In the second plot of this figure, the effective distances and that ones that would activate the safety constraint (we will refer to it as the safety distance) are shown. First we can notice how, in line with our expectation, the second and third vehicles are braking (see the third plot) only when the effective distance touches the safety distance. In fact here we recall that, according to how the vehicle controller is designed (see Section~\ref{sec:model_predictive_control_forumation}), only the activation of the safety constraint or a braking request from the platoon coordinator can lead to a significant braking action. Consequently, during the first braking of $1$ m/s$^2$, both follower vehicles do not brake, despite the deviation of their states from the reference trajectories. During the second braking of $2$ m/s$^2$, instead, the safety constraint of the second vehicle are activated and therefore it requires a braking action. Finally, during the third braking of $3$ m/s$^2$, the safety constraints of both follower vehicles activate and therefore they both brake. Note that the safety constraint is designed such that fuel-efficiency has priority on driver comfort. In fact, in this case, in order to be fuel-efficient, the braking action is required only when the platoon is in a safety critical situation. However, a priori knowledge of the braking profile of the first vehicle (e.g. having a model of the driver or handling the braking action autonomously) would have allowed to have a smoother  and less intense braking action.

\begin{figure}
\begin{center}
\includegraphics{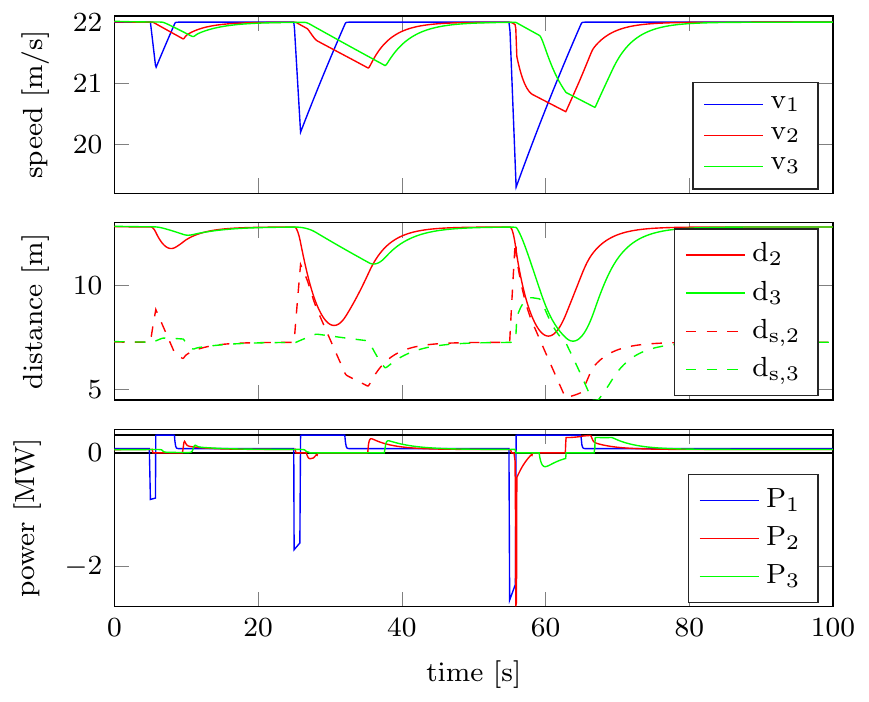}
\caption{Behavior of a three identical vehicles platoon driving on a flat road. The leading HDV  brakes three times at $5$, $25$ and $55$ s, with a braking deceleration of respectively $1$, $2$ and $3$ m/s$^2$ for $0.9$ s. The first plot shows the speed of the three vehicles: the second plot shows the distance between the vehicles and the respective safety distance computed using an adaptation of inequality \eqref{eq:safety_constraint}; the third plot shows the summation between the generated power by the engine and the braking systems of the vehicles.}
\label{fig:matlab_performance_braking}
\end{center}
\end{figure}

\begin{figure}
\begin{center}
\includegraphics{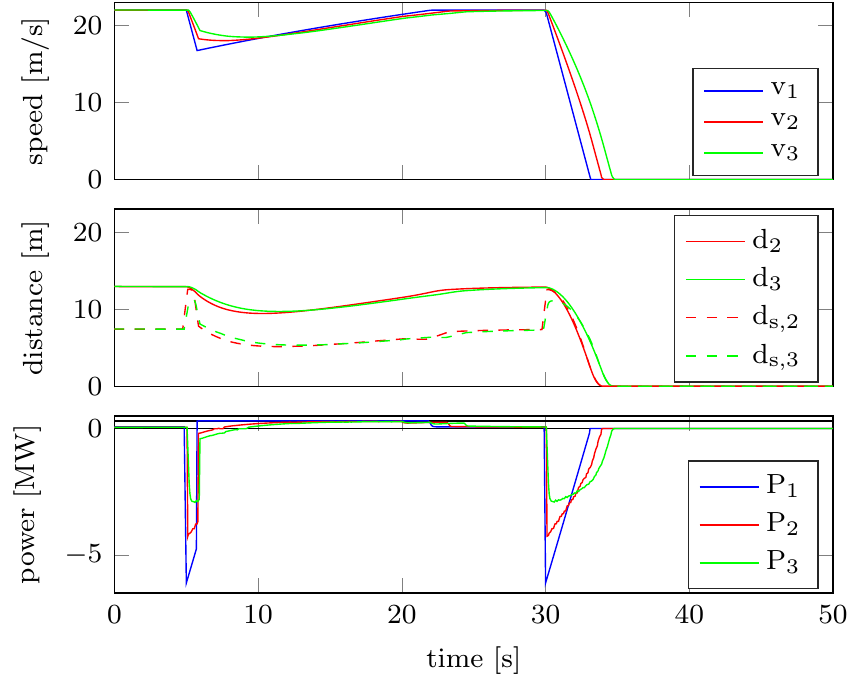}
\caption{Behavior of a three identical vehicles platoon driving on a flat road. The leading HDV  brakes a first time at $5$ s for $1$ s with a deceleration of $7$ m/s$^2$ and a second time at $25$ s with a deceleration of $7$  m/s$^2$ until arriving to full-stop. For the plot explanation refer to the caption of Figure~\ref{fig:matlab_performance_braking}.}
\label{fig:matlab_performance_braking_2}
\end{center}
\end{figure}

In Figure~\ref{fig:matlab_performance_braking_2}, we consider a more challenging scenario in which the first vehicle brakes with higher intensity, simulating an emergency situation. More precisely it brakes at $5$ s with a deceleration of $7$ m/s$^2$ for $1$ s and at $30$ s with the same deceleration until it arrives to full-stop. We can notice how, also in this scenario the safety constraint in each vehicle controller layer activates the braking action and guarantees no collision between the vehicles. 


\section{Performance analysis of the integrated system} \label{sec:analysis_integrated_system}
In this section we analyze the simulation results displayed in Figure~\ref{fig:matlab_integrated_system_simulation} of the platoon under the control of the integrated control architecture (i.e., platoon coordinator and vehicle controller). More precisely in this analysis we consider a platoon of three identical vehicles (whose parameters are defined in Table~\ref{tab:vehicle_parameters}) driving along the Sector A highlighted in Figure~\ref{fig:matlab_road_topology45}. This is the same sector for which the experimental results in \cite{phd_alam_2014} are displayed in Figure~\ref{fig:matlab__experiment} and analyzed in Section~\ref{sec:motivational_experiment}. 

At first glance, as expected from the platoon coordinator formulation, we can notice how all the vehicles follow the same speed and distance profiles in the space domain. Additionally, in order to follow such profiles, we can observe in the last plot how the second and third vehicle, thanks to the air drag reduction, need to generate less power than the leading vehicle. We now continue the analysis by focusing on the three segments highlighted in Figure~\ref{fig:matlab_integrated_system_simulation}:
\begin{itemize}
\item \textit{Segment 1:} due to the steep downhill, all vehicles are not able to maintain the constant speed without braking and, therefore, accelerate. However the platoon coordinator requires the leading vehicle to fuel slightly such that the follower vehicles can coast. In this case, the coordination role of the platoon coordinator allows to avoid braking action to all vehicles. 
\item \textit{Segment 2:} Since no gear shift is simulated the vehicles are able to maintain the time gap requirement during the uphill. 
\item \textit{Segment 3:} due to the longer downhill compare to the first one, the platoon exhibits a different behavior. First, the platoon coordinator requires all vehicles to decrease the speed to the minimum allowed (in this simulation it is set to $19$ m/s ) in order to hit the speed limit as late as possible. Second, since the speed limit is reached despite the decrease of speed at the beginning of the downhill, the platoon coordinator requires the first vehicle to coast and the follower vehicles to brake slightly to maximize the efficiency. In fact, in this case, to require the first vehicle to  fuel slightly and brake at the end of the downhill would be contradictory. 
\end{itemize}	
In conclusion the platoon, under the control of the integrated control architecture, shows a fuel-efficient and smooth behavior.

\begin{figure}
\begin{center}
\includegraphics{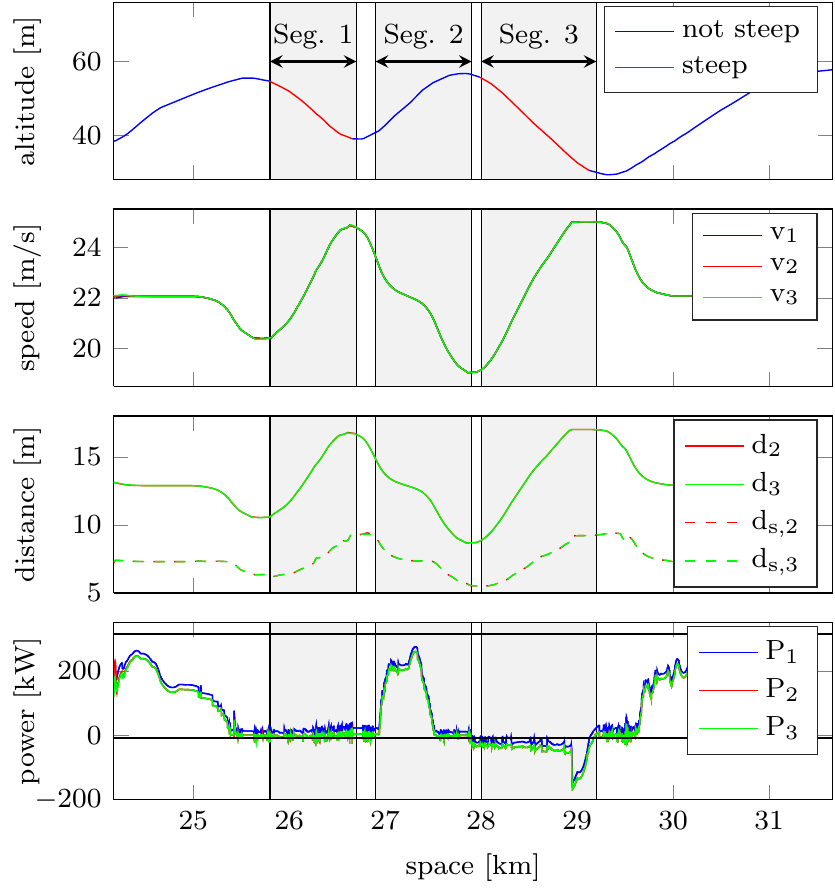}
\caption{Simulation results obtained using the proposed controller for a three-vehicle platoon while driving along the Sector A highlighted in Figure~\ref{fig:matlab_road_topology45}. The three vehicles are identical with parameters shown in Table~\ref{tab:vehicle_parameters}. The first plot shows the road topography. For the explanation of the other plots refer to the caption of Figure~\ref{fig:matlab_performance_braking}.}
\label{fig:matlab_integrated_system_simulation}
\end{center}
\end{figure}

\section{Conclusions and future works} \label{sec:conclusion}
\subsection{Conclusions}
In this paper we have presented a novel control architecture based on look-ahead control for fuel-efficient and safe HDV platooning. 

The use of a look-ahead control framework for HDV platooning has been first motivated by the analysis of real experiments. In particular in this analysis we concluded that the use topography information in order to predict the behavior of the vehicles and coordination between the vehicles can be beneficial for both fuel-efficiency and safety reasons. 

This led to the design of a novel control architecture for platooning. Such architecture is divided into two layers. A centralized higher layer, denoted as platoon coordinator, is responsible for the coordination of the platoon by defining a speed profile that is feasible and fuel-efficient for the entire platoon by exploiting preview topography information. Such speed profile is communicated to each block of the decentralized lower layer, denoted as vehicle controller layer. Within each vehicle controller a model predictive control routine tracks the reference speed profile and generates the real-time desired acceleration for the low-level vehicle controller. 

The performance of such control architecture has been evaluated through the analysis of numerical experiments. In details, the performance of the two layers has been studied both separately and in conjunction.

\subsection{Future works}
In the modeling of the vehicle powertrain we have assumed that the gear ratio can be chosen on a continuous interval on a unlimited span. However, this is not typically the case in commercial HDVs, where usually the transmission is handled by a gearbox that introduces fixed gear ratios and power losses during the gear shifts. Therefore in some future works we want to investigate how the presence of a gearbox should be managed in a optimal way. The optimal engine speed as function of the generated power shown in Figure~\ref{fig:fuel_model} and the knowledge of the current speed can be used to compute the instantaneous optimal gear ratio. However the power loss and the delay during the gear shift make the problem of when the HDVs should change gear (e.g., independently or simultaneously) and which gear they should engage not trivial. 

Secondly, we would like to investigate how external disturbances, as traffic ahead or a vehicle cutting in the platoon, can be handled in an autonomous way within the platoon controller framework. So far, in fact, such disturbances have been assumed to be handled manually by the drivers. However the prediction of local traffic would allow the platoon to move fuel-efficiently and safely in it.

\section*{Acknowledgment}
The authors gratefully acknowledge the European Union's Seventh Framework Programme within the project \mbox{COMPANION}, the Swedish Research Council and the Knut and Alice Wallenberg Foundation for their financial support. The authors also thank Dr.\ Assad Alam and Dr.\ Henrik Pettersson from Scania for the fruitful discussions.

\bibliographystyle{plain}        
	 
\end{document}